\title{\LARGE \bf Stability of Disturbance Based Unified Control}
\author{Oleg O. Khamisov$^{1}$,
\thanks{$^{1}$Oleg O. Khamisov is with Center for Energy Systems,
        Skolkovo Institute of Science and Technology, Skolkovo Innovation Center, Building 3, Moscow  143026, Russia
        {\tt\small oleg.khamisov@skolkovotech.ru}}
        Janusz Bialek$^{2}$ and
\thanks{$^{2}$Janusz Bialek is with Center for Energy Systems,
        Skolkovo Institute of Science and Technology, Skolkovo Innovation Center, Building 3, Moscow  143026, Russia
        {\tt\small J.Bialek@skoltech.ru}}
        Anatoly Dymarsky$^{3}$
\thanks{$^{3}$Anatoly Dymarsky is with Center for Energy Systems,
        Skolkovo Institute of Science and Technology, Skolkovo Innovation Center, Building 3, Moscow  143026, Russia
        {\tt\small A.Dymarsky@skoltech.ru}}
        }
\newcolumntype{C}[1]{>{\centering\let\newline\\\arraybackslash\hspace{0pt}}m{#1}}
\theoremstyle{plain}
\newtheorem{theorem}{Theorem}[section]
\newtheorem{lem}[theorem]{Lemma}
\theoremstyle{definition}
\theoremstyle{remark}
\newcommand{\diag }[0]{\operatorname{diag }}
\begin{document}
\maketitle
\maketitle
\begin{abstract}
Introduction of renewable generation leads to significant reduction of inertia in power system, which deteriorates the quality of  frequency control. This paper suggests a new control scheme utilizing controllable load to deal with low inertia systems. Optimization problem is formulated to minimize the system�s deviations from the last economically optimal operating point. The proposed scheme combines frequency control with congestion management and maintaining inter-area flows. The proposed distributed control scheme requires only local measurements and communication with neighbors or between buses participating in inter-area flows. Global asymptotic stability is proved for arbitrary network. Numerical simulations confirm that proposed algorithm can rebalance power and perform congestion management after disturbance with transient performance significantly improved in comparison with the traditional control scheme.
\end{abstract}

\section{INTRODUCTION}
The essence of power system control is to maintain system security at minimal cost. This paper deals with arguably two most important aspects of system security: keeping frequency within tight bounds around the nominal value while maintaining inter-area flows, and keeping power flows below line limits so that the lines are not overloaded (congestion management). Economic dispatch is enforced by running Security Constrained Optimal Power Flow (SCOPF). Typically (N-1) security standard is observed which means that no single contingency (i.e. a loss of a single element such as a line or a generator) should result in such redistribution of power flows that any line is overloaded.

Frequency control scheme consist of three components \cite{WW} - \cite{ADSJ}. Primary frequency control is aimed to counter initial frequency drop using only local frequency measurements. It is implemented using droop control of turbine governors, and operates at timescale of tens of seconds. The secondary frequency control called Automatic Generation Control (AGC) is centralized and uses integral controller in order to deliver frequency to its nominal value (50 or 60 HZ). Additionally it ensures that inter-area flows in the network remain unchanged.

This traditional control scheme was developed to operate on conventional generators, which have high inertia values due to the size of turbines.  As  a  result,  frequency  does  not change abruptly in case  of a large disturbance and frequency control has time to react. Introduction of renewable generation leads to significant reduction of inertia, thus reducing robustness of the existing control scheme \cite{LI}, \cite{ASD}. Because of this fact some countries, like Ireland , have established constraints on the instantaneous penetration of renewable generation.

One way to improve frequency response in low-inertia systems is by involving responsive loads \cite{CH} - \cite{SIF}. Including load side control has an advantage that, unlike generators, loads can react quickly  to control commands hence improving frequency control. However including a large number of controllable loads makes the centralized control more complicated. Consequently a number of distributed frequency control  schemes were recently developed \cite{ADSJ}, \cite{DSB} - \cite{BLD}. The control scheme proposed in this paper, first presented in \cite{Ol1}, is also distributed and can include load-side control but it works on estimation of the disturbance size rather than directly with frequency while minimizing the deviation from the last optimal operating point. This paper extends \cite{Ol1} by adding congestion management, maintaining inter-area flows and deals with infeasible cases, when the disturbance size or location does not allow to  do both frequency control and congestion management. In this case the control prioritises returning frequency to the nominal value, even if line or inter-area flows limits will stay violated. Only communication between neighbours and between buses, participating in inter-area flows is required. This reduces the amount of communication with System Operator and makes it possible to connect the loads on the plug-and-play basis, when new buses can be added to the system without adjusting control parameters of the other ones.

Introducing congestion management as a part of the frequency control makes it possible to switch economic dispatch from (N-1) security standard to (N-0) as a corrective control will be activated when a disturbance occurs to prevent line overloads. This will reduce the cost of economic dispatch.

We consider Power network model \cite{BH}, \cite{WW} - \cite{BV} that includes turbines and governors dynamics,  swing dynamics at generator buses and linearized power flows. Behaviour of turbine and governor are approximated by linear first order equations for control design. Simulations undertaken using 39-bus New England system demonstrated that the presented control performs frequency restoration at  a faster timescale than the AGC while doing congestion management at the same time.
The article is organized in the following way. In section I-C network model and control aims are described, idea of derivation is given in the section II and control itself  is given in III. Results of numerical experiments are given IV. Conclusion and final observations are given in V.

\section{The Power System Model}\label{psm}
\subsection{Notations}
Let $\mathbb{R}$ be the set of real numbers. For an arbitrary matrix (vector) $X$ its transpose is denoted by $X^T$. $\textbf{0}$ and $I$ are zero matrix and identity matrix of the corresponding size respectively, $0$ and $\rho$ are zero vector and vector of ones of the corresponding size respectively. $\diag(x_1,\dots,x_2)$ is diagonal matrix with elements $x_i$.
If $x$ and $y$ are vectors, then comparison operators ($<,>,\le,\ge$) are considered to be elementwise.
\subsection{Model description}\label{ps}
Classical generator model \cite{BH}, \cite{WWW} combined with generator speed governor \cite{KN} and turbine model is used.
The power transmission network \cite{BV} is described by a connected directed graph $(V,E)$,  where $V$ is the set of $n$ buses, $E$ is set of $m$ lines. The network consists of $g$ generators and $g$ load buses. We define as $G$ and $L$ sets of generator and load buses respectively. We fix bus voltage to be equal $1$ p.u., line flows are approximated with linear DC equations.
Dynamics of the power transmission network is defined by the system of linear differential algebraic equations \cite{MZL}, \cite{ZMLB}, \cite{ZL}, \cite{ZML}, which is given below.
\begin{subequations}\label{msys}
\begin{align}
M\dot\omega_G =& -D_G\omega_G-C_Gp+p^M + p^D_G\label{msys1}\\
0 =& -D_L\omega_L-C_Gp+p^C_L + p^D_L\label{msys2}\\
\dot p =& BC^T\omega\label{msys3}\\
T^M\dot p^M =& -p^M+\alpha,\label{msys4}\\
T^G\dot\alpha_i =& -\alpha_i+p^C_G,\label{msys5}
\end{align}
\end{subequations}
Additionally dependance on phase angles deviations will be used
\begin{equation}\label{msys_ph2}
p = BC^T\theta,
\end{equation}

Variables of the system have the following meanings:
\begin{itemize}
  \item $\omega=(\omega_1,\dots,\omega_n)^T$ is vector of deviations of bus frequencies from nominal value,
  \item $p=(p_1,\dots,p_m)^T$ is vector of deviations of line active power flows from their reference values,
  \item $p^M=(p^M_1,\dots,p^M_g)$ is vector of mechanic power injections at generators,
  \item $\alpha=(\alpha_1,\dots,\alpha_g)$ is vector of positions of valves,
  \item $p^C=(p^C_i,\dots,p^C_n)$ is vector of control values,
  \item $\theta = (\theta_1,\dots\theta_n)^T$ is vector of deviations of phase angles from reference value.
\end{itemize}

Here for any vector $n\in\mathbb{R}^n$ notations $x_G$ and $x_L$ define subvectors of components corresponding to load and generator buses respectively. 

Here variables stand for deviations from nominal values, for simplicity symbol $\triangle$ is omitted. Parameters of the system:
\begin{itemize}
    \item $M=\diag(M_1,\dots,M_g)$ are generators inertia constants,
    \item $D=\diag(d_i,\dots,d_n)$ are steam and mechanical damping of generators and frequency-dependent loads, $D_G$ is submatrix that consisting only $d_i,\;i\in G$, $C_L$ consists of $d_i,\;i\in L$. $M$ is diagonal matrix of $M_i,\;i\in G$, analogically we define $T^M$, $T^G$ and $W$.
    \item $C$ is an incidence matrix of the graph, $C_G$ is submatrix of $C$ constructed by rows $C_i,\;i\in G$, analogically we define $C_L$.
    \item $p^D=(p^D_1,\dots,p^D_n)$ is vector of unknown disturbances (assumed constant),
    \item $B=\diag(b_1,\dots,b_m)$ are line parameters that depend on line susceptances, voltage magnitudes and reference phase angles,
    \item $T^M=\diag(T^M_1,\dots,T^M_g)$ are time constants that characterize time delay in fluid dynamics in the turbine,
    \item $T^G=\diag(T^G_1,\dots,T^G_g)$ are time constants that characterize time delay in governor response.
\end{itemize}
Equations of the system:
\begin{itemize}
    \item Equations (\ref{msys1}) describe classical generator dynamics,
    \item Equations (\ref{msys2}) are power balance equations for load buses,
    \item Equations (\ref{msys3}) are equations of DC linearized power flows,
    \item Equations (\ref{msys4}) are turbine dynamics equations,
    \item Equations (\ref{msys5}) are governors dynamics equations.
\end{itemize}
Deviations of bus power injections from the nominal values are defined in the following way:
\begin{equation}\label{epe}
p^E = -Cp.
\end{equation}


Control bounds are given by
\begin{equation}\label{cl}
    \underline p^C\le p^C\le\overline p^C.
\end{equation}

Line bounds are given by
\begin{equation}\label{ll}
    \underline p\le p\le\overline p,
\end{equation}
\begin{equation}\label{ll2}
    \underline p\le p.
\end{equation}
to simplify control derivation and stability proof. They will be added to the final control description in the section \ref{c_desc}.
Finally inter-area flows constraints are given by equality constraints
\begin{equation}\label{al}
    \sum_{j=1}^ms^k_jp_j=\psi_k,\;k=\overline{1,\xi},
\end{equation}
where $\xi$ is number of inter-area flows of the system and variables $s^k_j\in\{-1,0,1\},\;j=\overline{1,m},\;k=\overline{1,\xi}$ are indicators whether line $j$ participates in the inter-area flow $k$, or not, $-1$ and $1$ are used to ensure correct direction of the power flow.
In matrix form this constraint has form
\begin{equation}\label{mal}
    Sp-\psi=0,
\end{equation}
where element $k,j$ of $S$ is equal to $s^k_j$.

The objective function, that we want to minimize within the presented control scheme is defined as the deviation from the last economic dispatch
\begin{equation}\label{op}
f(p^C)=\frac{1}{2}(p^C)^TWp^C,
\end{equation}
where $W\succ0$ is diagonal matrix of costs of flexible generators that participating in frequency control.

It is assumed, that information exchange can be done only between neighboring buses. As a result, control has to work in a distributed fashion.

\subsection{Control aims}\label{ps}

It is required to derive control vector function $p^C(t)$, so that it satisfies the following criteria:
\begin{enumerate}
    \item Control values must be always within control limits (\ref{cl}).
    \item Frequency deviations $\omega(t)$ must asymptotically converge to zero if that does not violate control limits.
    \item Control must perform congestion management and keep inter-area flows within limits in stationary point. In stationary point system must satisfy line bounds (\ref{ll}), (\ref{al}) if that does not violate previous items.
    \item Control must minimize $f(p^C)$, if all previous items are satisfied.
    \item Control on each bus can use only local information and information from neighbor buses.
\end{enumerate}

\section{Idea of control}\label{ssc}

If we omit infeasible case, then vectors $\omega$, $p$, $p^M$, $\alpha$, $p^C$ must minimize $f(p^C)$ under constraints (\ref{cl}), (\ref{ll}), (\ref{mal}) and also satisfy algebraic equations that we acquire from (\ref{msys}) by bringing all derivatives to zero. This system gives a set of algebraic equations than can be simplified to the following:
\begin{equation}\label{om_con}
  \omega=0,\;p^M=\alpha=p^C_G,
\end{equation}
\begin{equation}\label{be}
-Cp+p^C+p^D=0.
\end{equation}

From (\ref{om_con}) we can exclude $\omega$, $p^M$ and $\alpha$. Since we assume lossless lines, matrix of the system (\ref{msys}) is singular, and the system has infinite number of equilibriums with different $p$. Physically that means, that system can contain power flow cycles. To ensure, that we do not encounter this problem in the future, equations (\ref{msys_ph2}) are utilized and the following substitution is used
\begin{equation}\label{p_eta}
    BC^T\eta=p,
\end{equation}

Here vector of axillary variables $\eta$ is used instead of $\theta$ to avoid variables repetition in physical system and in the control scheme.
Then (\ref{be}) is equivalent to
\begin{equation}\label{ebe}
    -CBC^T\eta+p^C+p^D=0.
\end{equation}
The minimization of deviation from the last economic dispatch problem has form (\ref{op}), (\ref{cl}), (\ref{ll}), (\ref{mal}), (\ref{ebe}) and further will be referred as optimal control problem.

\subsection{Frequency control with no constraints}
Let us assume that inequality constraints (\ref{cl}), (\ref{ll}), (\ref{al}) are not present. Then Lagrange function \cite{B} has form
\begin{equation}\label{numfil4}
L_1(p,p^C,\lambda)=\frac{1}{2}(p^C)^TWp^C-\lambda^T(-CBC^T\eta+p^C+p^D)
\end{equation}
Its stationary point satisfies the following system:
\begin{subequations}\label{sys1s}
\begin{align}
&CBC^Tu=0,\\
&CBC^T\eta-W^{-1}u-p^D=0,\\
&Wu=p^C.
\end{align}
\end{subequations}
Consider the following differential system:
\begin{subequations}\label{dsys1}
\begin{align}
\dot\eta=&-CBC^Tu,\\
\dot u=&CBC^T\eta-W^{-1}u-p^D,\\
p^C=&Wu.
\end{align}
\end{subequations}
It is linear differential algebraic system which is stable and converges to the solution of (\ref{msys1}) thus giving the solution of optimal control problem.


\subsection{Frequency control with control limits}
Let us now assume, that only constraints (\ref{cl}) are present. Then slight change to system (\ref{dsys1}) can be applied:
\begin{subequations}\label{dsys2}
\begin{align}
\dot\eta&=-CBC^Tu,\\
p^C &= [W^{-1}u]^{\overline p^C}_{\underline p^C},\\
\dot u&=CBC^T\eta-p^C-p^D.
\end{align}
\end{subequations}
Here for vectors $x$, $\overline y$ and $\underline y$ of the same size, $[x]^{\overline y}_{\underline y}$ is vector function, such that $([x]^{\overline y}_{\underline y})_i=\min\{\overline y_i,\max\{\underline y_i,x_i\}\}$.

If $\frac{u_i}{w_i}$ is bigger than corresponding control limit $\overline p^C_i$, we can say, that this bus does not have ay control, however its disturbance is not $p^D_i$, but $p^D_i+\overline p^C_i$ and $u_i$ is just an axillary valuable used by other buses. As soon as $\frac{u_i}{w_i}$ becomes smaller, than control limit than we again have control of this bus, and disturbance on it is equal $p^D_i$. It is possible to introduce control limits as inequality constraints and add them to Lagrange function, but they can be violated in infeasible case, which is described in the next section. Introduction of variables $u_i$ guarantees that control limits will be satisfied, even if it is impossible to restor frequency or satisfy line limits.
\subsection{Infeasible case with control limits}\label{iccl}
When disturbance size is bigger, then control reserve
\begin{equation}\label{imb}
-\sum_{i=1}^np^D_i>\sum_{i=1}^n\overline p^C_i,
\end{equation}
then in the system (\ref{dsys3}) variables $u_i$ will grow infinitely. In order to counter this effect additional limits are added:
\begin{subequations}\label{dsys3}
\begin{align}
\dot\eta&=-CBC^Tu,\label{dsys31}\\
p^C &= [W^{-1}u]^{\overline p^C}_{\underline p^C},\\
\dot u&=\Gamma^u(CBC\eta-p^C-p^D).\\
\end{align}
\end{subequations}
where
\begin{equation}\label{kul}
\begin{gathered}
\Gamma^u=\diag(\gamma_1^u,\dots,\gamma_n^u),\\
\gamma_i^u=\left\{
             \begin{array}{cc}
               0, & \mbox{if }u_i=\overline K^u_i\mbox{ and }(CBC\eta)_i-p^C_i-p^D_i\ge0, \\
               0, & \mbox{if }u_i=\underline K^u_i\mbox{ and }(CBC\eta)_i-p^C_i-p^D_i\le0, \\
               1, & \mbox{otherwise}, \\
             \end{array}
           \right.\\
\overline K^u_i>\max\{\overline p^C_i,\;i=\overline{1,n}\},\;i\in\overline{1,n}.\\
\end{gathered}
\end{equation}
 The last inequality is important, since from equation (\ref{dsys31}) in equilibrium $u=\gamma\rho$, where $\gamma$ is some scalar, and in case when frequency cannot be restored, all control values must reach their upper bounds: $p^C=\overline p^C\le u$. This way frequency will be brought to the closest possible value to nominal.
\subsection{Frequency control with line limits}
Let us assume, that constraints (\ref{ll}) are present but (\ref{cl}) and (\ref{mal}) are not. Initially we will use (\ref{ll}) as exact line constraints, given by equalities: $p=\overline p$. Lagrange function for this problem has the following form:
\begin{equation}
\begin{gathered}
L_2(p,p^C,\eta,\lambda,\xi,\mu)=\frac{1}{2}(p^C)^TWp^C-\\-\lambda^T(-CBC\eta+p^C+p^D)-\overline\mu^T(BC^T\eta-\overline p).
\end{gathered}
\end{equation}
Let us present corresponding differential system as before:
\begin{subequations}\label{dsys5}
\begin{align}
\dot\eta &= -CBC^Tu-CB\overline\mu,\label{dsys51}\\
\dot u &= CBC^T\eta-W^{-1}u-p^D,\\
\dot{\overline\mu} &= \Gamma^{\overline\mu}(BC^T\eta - \overline p),\label{dsys53}\\
p^C=&Wu,
\end{align}
\end{subequations}
where
\begin{equation}
\begin{gathered}
\Gamma^{\overline \mu}=\diag(\gamma_1^{\overline \mu},\dots,\gamma_m^{\overline \mu}),\\
\gamma_i^{\overline \mu}=\left\{
             \begin{array}{cc}
               0, & \mbox{if }\overline\mu_j=0\mbox{ and }\hat BC^T\eta - \overline p\le0, \\
               0, & \mbox{if }\overline\mu_j=K^{\overline \mu}_j\mbox{ and }BC^T\eta - \overline p\ge0,\\
               1 & \mbox{otherwise}. \\
             \end{array}
           \right.
\end{gathered}
\end{equation}
Here equation  Vector $\overline\mu$ is non-negative at any point in time and $\overline\mu^T(p-\overline p)=0$, so equations (\ref{dsys53}) correspond to the complementary slackness conditions. This way the differential system will converge to solution of the optimal control problem with line limits given by inequalities (\ref{ll}). 
Similarly to section \ref{iccl}, if the feasible set is empty, variables $\mu_j$ from (\ref{dsys53}) will converge to $\infty$, therefore it is bounded by $K^\mu_j>0$. Benefit of such approach is in the fact, that if frequency restoration and congestion management cannot be performed at the same time. In this case than dual variables $\overline\mu_j$ will reach their limits $\overline K^\mu_j$. Then constraints (\ref{ll}) will no longer affect differential system (\ref{dsys5}) and it will be responsible for frequency restoration only.

\begin{subequations}
\begin{align}
\dot\eta &= -CBC^Tu-CB\overline\mu,\\
\dot u &= CBC^T\eta-W^{-1}u-p^D,\\
\dot{\underline\mu} &= \Gamma^{\underline\mu}(\underline p - BC^T\eta),\\
p^C&=Wu,
\end{align}
\end{subequations}
where
\begin{equation}
\begin{gathered}
\Gamma^{\underline \mu}=\diag(\gamma_1^{\underline \mu},\dots,\gamma_m^{\underline \mu}),\\
\gamma_i^{\underline \mu}=\left\{
             \begin{array}{cc}
               0, & \mbox{if }\underline\mu_j=0\mbox{ and }\underline p - BC^T\eta\le0, \\
               0, & \mbox{if }\underline\mu_j=K^{\overline \mu}_j\mbox{ and }\underline p - BC^T\eta\ge0,\\
               1 & \mbox{otherwise}. \\
             \end{array}
           \right.
\end{gathered}
\end{equation}

\subsection{Frequency control with inter-area flows limits}
Let us now assume, that only constraints (\ref{mal}) are present.
As in previous cases, 
corresponding differential system is given by
\begin{subequations}\label{dsys3}
\begin{align}
\dot\eta&=CBC^Tv-CBS^T\phi,\\
\dot u &= CBC\eta-W^{-1}u-p^D,\\
\dot\phi&= \Gamma^\phi(SBC^T\eta-\psi),\\
p^C=&Wu,
\end{align}
\end{subequations}
where
\begin{equation}
\begin{gathered}
\Gamma^\phi=\diag(\gamma_1^\phi,\dots,\gamma_\xi^\phi),\\
\gamma_i^\phi=\left\{
             \begin{array}{cc}
               0, & \mbox{if }\phi_k=\overline K^\phi_k\mbox{ and } (SBC^T\eta)_k-\psi_k\ge0, \\
               0, & \mbox{if }\phi_k=\underline K^\phi_k\mbox{ and }(SBC^T\eta)_k-\psi_k\le0, \\
               1, & \mbox{otherwise}. \\
             \end{array}\right.
\end{gathered}
\end{equation}
Here $\overline K^\phi>0>\underline K^\phi$. As in previous cases upper and lower limit on $\phi$ are present to ensure, that the system will remain stable even in the infeasible case.

\section{Control}\label{c_desc}
If we combine results of the previous section, final version of control will have the following form:

\begin{subequations}\label{cnt}
\begin{align}
\dot\eta &= CB\left(-C^Tu-\overline\mu+\underline\mu-S^T\phi\right),\label{cnt1}\\
\chi &= BC^T\eta,\label{cnt2}\\
p^C &= [W^{-1}u]^{\overline p^C}_{\underline p^C},\label{cnt3}\\
\dot u&=\Gamma^u(C\chi-p^C-p^D),\label{cnt4}\\
\dot{\overline\mu} &= \Gamma^{\overline\mu}(\chi - \overline p),\label{cnt5}\\
\dot{\underline\mu} &= \Gamma^{\underline\mu}(\underline p - \chi),\label{cnt6}\\
\dot\phi&= \Gamma^\phi(S\chi-\psi),\label{cnt7}
\end{align}
\end{subequations}
Here for stability purposes $K^u$ has to satisfy the following inequalities:
\begin{equation}\label{kk_l}
\max\{-\underline K^\phi,\overline K^\phi, K^{\underline\mu},K^{\overline\mu}\}\ll\min\{-\overline K^u,\underline K^u\},
\end{equation}
\begin{equation}\label{kc_l}
\underline K^u\ll\underline p^C,\;\overline p^C\ll\overline K^u.
\end{equation}

\subsection{Stability proof}
We define vectors $\eta^*,u^*,\overline\mu^*,\underline\mu^*,\phi^*$ as an equilibrium point of the system (\ref{cnt}). It satisfies the following system of algebraic equations:
\begin{subequations}\label{ep}
\begin{align}
0 &= CB\left(-C^Tu^*-\overline\mu^*+\underline\mu^*-S^T\phi^*\right),\label{ep1}\\
\chi^* &= BC^T\eta^*,\label{ep2}\\
(p^C)^* &= [W^{-1}u^*]^{\overline p^C}_{\underline p^C},\label{ep3}\\
0 &=\Gamma^{u^*}(C\chi^*-(p^C)^*-p^D),\label{ep4}\\
0 &= \Gamma^{\overline\mu^*}(\chi^* - \overline p),\label{ep5}\\
0 &= \Gamma^{\underline\mu^*}(\underline p - \chi^*),\label{ep6}\\
0 &= \Gamma^{\phi^*}(S\chi^*-\psi),\label{ep7}
\end{align}
\end{subequations}

\begin{lem}
Solution of (\ref{ep}) always exists.
\end{lem}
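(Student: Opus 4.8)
The plan is to recognize the equilibrium system (\ref{ep}) as the stationarity (saddle-point) conditions of a convex optimization problem and then to exhibit a minimizer. The dynamics (\ref{cnt}) are a projected primal--dual gradient flow: $\eta$ is the primal variable, $p^C$ (through the clamped $u$) is the primal control, and $u,\overline\mu,\underline\mu,\phi$ are the multipliers of the balance equation (\ref{ebe}), the line limits (\ref{ll}) and the inter-area equalities (\ref{mal}). Accordingly, every line of (\ref{ep}) is a projected-stationarity condition: each diagonal matrix $\Gamma$ forces its variable either to satisfy the associated equality/complementary-slackness relation, or to sit at one of its $K$-bounds with the driving residual pointing outward. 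These are exactly the optimality conditions of a convex program over the product of the boxes $[\underline K^u,\overline K^u]$, $[0,K^{\overline\mu}]$, $[0,K^{\underline\mu}]$, $[\underline K^\phi,\overline K^\phi]$ for the multipliers, so the only source of unboundedness is the primal variable $\eta$.

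First I would eliminate the combinatorial, non-smooth character of the $\Gamma$ switches by passing to the penalized formulation obtained from the inner maximization over the bounded multipliers. A multiplier constrained to a box produces, after maximization, a convex piecewise-linear penalty of bounded slope: the pair $(\overline\mu,\underline\mu)$ yields $K^{\overline\mu}(\chi-\overline p)^+ + K^{\underline\mu}(\underline p-\chi)^+$ for the line limits, $\phi\in[\underline K^\phi,\overline K^\phi]$ yields a two-sided piecewise-linear penalty for the inter-area residual $S\chi-\psi$, and the clamped relation $p^C=[W^{-1}u]^{\overline p^C}_{\underline p^C}$ with bounded $u$ encodes the quadratic cost $f$ of (\ref{op}) together with the penalized balance residual $C\chi-p^C-p^D$. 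The reduced objective $\Psi(\eta)$ is convex and finite everywhere, so (\ref{ep}) becomes equivalent to $0\in\partial\Psi(\eta^*)$; this sidesteps any need to guess the active set.

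Second I would establish existence of a minimizer by Weierstrass, for which coercivity is the key. The all-ones direction must be quotiented out: since $C^T\rho=0$ and $\chi=BC^T\eta$ depends on $\eta$ only through $C^T\eta$, I restrict to $\eta\perp\rho$, where the weighted Laplacian $CBC^T$ is positive definite. The balance penalty pins $C\chi=CBC^T\eta$ to a bounded region (the control $p^C$ lives in $[\underline p^C,\overline p^C]$), and invertibility of $CBC^T$ on $\rho^\perp$ then bounds $\eta$; together with the line and inter-area penalties this makes $\Psi$ coercive on $\rho^\perp$. Convexity plus coercivity give a minimizer $\eta^*$, and reading off the maximizing multipliers of the inner problem yields $u^*,\overline\mu^*,\underline\mu^*,\phi^*$; by construction the tuple satisfies (\ref{ep}). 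The feasible case is simply the situation in which all penalties can be driven to zero, recovering the ordinary KKT point.

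The main obstacle is the infeasible case combined with the saturation structure. When the disturbance exceeds the control reserve (\ref{imb}), or the line and inter-area limits are mutually inconsistent, the original constrained problem has empty feasible set, so no standard KKT point exists and the usual Slater/strong-duality route fails. The $K$-bounds are precisely what rescues existence, but one must verify that the induced exact-penalty objective remains coercive in the right quotient and that the bound hierarchy (\ref{kk_l})--(\ref{kc_l}) is consistent: when frequency cannot be restored the equilibrium must be forced to $u^*=\gamma\rho$ with all controls saturated at $\overline p^C$, as anticipated in Section \ref{iccl}, rather than producing an unbounded or empty solution set. Care is also needed where several penalty kinks coincide, so that the inclusion $0\in\partial\Psi$ admits a consistent selection of the saturated multipliers.
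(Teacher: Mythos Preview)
Your approach is correct, and it is genuinely different from the paper's proof. The paper argues existence by a direct, essentially constructive, linear-algebra argument: it first rewrites (\ref{ep1}) as a weighted Laplacian equation $CBC^Tu^*=CB(-\overline\mu^*+\underline\mu^*-S^T\phi^*)$ and observes that the right-hand side always lies in the range of $CBC^T$; it then solves (\ref{ep4}) for $\eta^*$ by noting that either the balance residual has zero sum (so the Laplacian system for $\eta^*$ is consistent) or one can slide $u^*$ along the $\rho$-direction until some component hits its $K$-bound, which switches the corresponding $\Gamma^{u^*}_i$ to zero and removes a row, leaving a full-row-rank system; finally, for each of (\ref{ep5})--(\ref{ep7}) it observes that either the constraint residual vanishes or the multiplier can be placed at its $K$-bound so that the associated $\Gamma$ kills the equation. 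There is no optimization or variational principle invoked at all.

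Your route---reading (\ref{ep}) as the projected saddle-point conditions of a concave--convex function, collapsing the bounded multipliers into piecewise-linear exact penalties, and then invoking coercivity of the resulting $\Psi(\eta)$ on $\rho^\perp$ together with Weierstrass---is more conceptual and avoids the combinatorial case analysis entirely. It also makes the role of the $K$-boxes transparent: they are exactly what guarantees that the inner maximization is finite, hence $\Psi$ is everywhere defined. The paper's argument, by contrast, is lighter on machinery and ties directly to the network structure (incidence matrix, Laplacian rank), but as written it glosses over the simultaneous consistency of the choices of $u^*$, $\eta^*$ and the sign conditions required for each $\Gamma=0$ switch. One small over-worry in your proposal: the hierarchy (\ref{kk_l})--(\ref{kc_l}) is \emph{not} needed for mere existence of a solution to (\ref{ep}); coercivity of $\Psi$ on $\rho^\perp$ holds as soon as the $K$-boxes are nonempty with $\underline K^u<0<\overline K^u$. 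The hierarchy matters only for the later lemmas that characterize \emph{which} equilibrium is attained (e.g., that all controls saturate when reserve is insufficient), so for the present lemma you can drop that caveat.
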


\begin{proof}
Firstly equation (\ref{ep1}) can be written as
\begin{equation}\label{u_slv}
CBC^Tu^* = CB\left(-\overline\mu^*+\underline\mu^*-S^T\phi^*\right).
\end{equation}
Here matrix $CBC^T$ is singular, however its only null space vector is vector of once.
Additionally sum of elements of right hand side vector is always $0$ due to multiplication by incidence matrix $C$ so it is never in the null space of $CBC^T$, therefore for any $\overline\mu^*$, $\underline\mu^*$ and $\phi^*$ there exists such $u^*$, that equation (\ref{ep1}) is satisfied.
Note that equation (\ref{u_slv}) has infinite amount of solutions in $u$ of the form $u^*+\rho\beta$, $\beta\in\mathbb{R}$ due to matrix $CBC^T$ being singular. Let us consider equation (\ref{ep2}).
For any $(p^C)^*$ such, that $\sum_{i=1}^n((p^C_i)^*+p^D_i)=0$, then exists such $\eta^*$, that (\ref{ep2}) is satisfied, since $C$ is an incidence matrix.
Otherwise there exists $u^*$ solution of (\ref{ep1}) such that at lease one of $u^*_i$ is equal to $\overline K^u_i$, therefore corresponding $\Gamma^{u^*}_i$ is equal $0$.
That excludes at least one equation from (\ref{ep2}) and one row from matrix $CB$, which means, that reduced matrix has full row rank and solution in $\eta^*$ always exists.
Let us now consider equations (\ref{ep4}).
Recall, that equations (\ref{ep1}) and (\ref{ep2}) can be satisfied for any $\overline\mu^*$, $\underline\mu^*$ and $\phi^*$.
If for some $j$ $\chi_j^*=\overline p_j$, than the equation is satisfied, otherwise $\overline\mu^*_j=K^{\overline\mu}_j$ and $\Gamma^{\overline\mu^*}_j=0$ and equation degenerates to $0=0$.
Similarly equations (\ref{ep5}) and (\ref{ep6}) are always satisfied.
\end{proof}

\begin{lem}
If all constraints (\ref{ll}), (\ref{cl}) and (\ref{mal}) are satisfied, stationary point $(\eta^*,u^*,\overline\mu^*,\underline\mu^*,\phi^*)$ deliver solution of the optimization problem (\ref{op}), (\ref{ebe}), (\ref{ll}), (\ref{cl}), (\ref{mal}).
\end{lem}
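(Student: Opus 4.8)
The plan is to show that the equilibrium system (\ref{ep}) is precisely the Karush--Kuhn--Tucker (KKT) system of the optimal control problem (\ref{op}), (\ref{ebe}), (\ref{ll}), (\ref{cl}), (\ref{mal}). The objective $f(p^C)=\tfrac12(p^C)^TWp^C$ is convex, since $W\succ0$ makes it convex in $p^C$ and it is trivially convex in $\eta$; moreover every constraint is affine in $(\eta,p^C)$, so the linearity constraint qualification holds automatically and the KKT conditions are both necessary and sufficient for global optimality. Hence it suffices to exhibit multipliers under which (\ref{ep}) matches the KKT conditions and to check stationarity, primal and dual feasibility, and complementary slackness.

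First I would form the Lagrangian with multiplier $\lambda$ for the power balance equality (\ref{ebe}), multipliers $\overline\mu\ge0$, $\underline\mu\ge0$ for the line bounds (\ref{ll}) written as $BC^T\eta\le\overline p$ and $\underline p\le BC^T\eta$, and $\phi$ for the inter-area equalities (\ref{mal}). The control bounds (\ref{cl}) I would not assign explicit multipliers; instead I would note that, because $W$ is diagonal, the $p^C$-stationarity together with the box constraint and its complementary slackness is equivalent to the single projection identity $(p^C)^*=[W^{-1}u^*]^{\overline p^C}_{\underline p^C}$, which is exactly (\ref{ep3}) once $\lambda$ is identified with $u^*$ (with the sign dictated by the paper's Lagrangian convention). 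With this identification, differentiating the Lagrangian in $\eta$ reproduces (\ref{ep1}) under the assignment $\overline\mu=\overline\mu^*$, $\underline\mu=\underline\mu^*$, $\phi=\phi^*$, so stationarity is matched in both blocks.

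Next I would verify the remaining KKT requirements. Primal feasibility is granted by the hypothesis that all of (\ref{ll}), (\ref{cl}), (\ref{mal}) hold; the power balance equality follows from (\ref{ep4}) once I argue that under feasibility no auxiliary variable saturates, so $\Gamma^{u^*}=I$ and $C\chi^*-(p^C)^*-p^D=0$, i.e. (\ref{ebe}). Dual feasibility $\overline\mu^*\ge0$, $\underline\mu^*\ge0$ holds by construction, since the dynamics (\ref{cnt5})--(\ref{cnt6}) confine these variables to intervals with nonnegative lower endpoint. The complementary slackness conditions $\overline\mu^{*T}(BC^T\eta^*-\overline p)=0$ and $\underline\mu^{*T}(\underline p-BC^T\eta^*)=0$ are read off directly from (\ref{ep5}), (\ref{ep6}): on any line where a bound is slack the corresponding $\gamma=0$ branch forces the multiplier to vanish, and where the multiplier is positive the bound is active; the inter-area equalities (\ref{ep7}) reduce to feasibility alone.

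The main obstacle I anticipate is the bookkeeping of the diagonal activation matrices $\Gamma^{u^*},\Gamma^{\overline\mu^*},\Gamma^{\underline\mu^*},\Gamma^{\phi^*}$ and their saturation bounds $K$. The delicate step is to show that, under the standing assumption that all constraints are simultaneously satisfied, none of $u^*,\overline\mu^*,\underline\mu^*,\phi^*$ reaches the artificial limit $K$ that was introduced only to preserve stability in the infeasible case. Using the feasibility hypothesis together with the separation of scales (\ref{kk_l})--(\ref{kc_l}), one argues that every relevant $\gamma$ equals $1$ on the components where a constraint is binding, so that (\ref{ep4})--(\ref{ep7}) collapse to the exact equality and complementarity conditions rather than to the degenerate $0=0$ branch. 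Once this is established, (\ref{ep}) coincides line for line with the KKT system, and convexity upgrades the stationary point to a global optimum of the optimal control problem.
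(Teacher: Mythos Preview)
Your proposal is correct and follows essentially the same route as the paper: both arguments show that the equilibrium system (\ref{ep}) coincides with the KKT conditions of the convex problem (\ref{op}), (\ref{ebe}), (\ref{ll}), (\ref{cl}), (\ref{mal}). The only cosmetic difference is that the paper introduces explicit multipliers $\overline\sigma,\underline\sigma$ for the control box constraints and then makes the substitution $u=p^C+\overline\sigma-\underline\sigma$, whereas you absorb that step directly into the projection identity (\ref{ep3}) with $\lambda=u^*$; your added remarks on convexity, the affine constraint qualification, and the behaviour of the $\Gamma$ matrices under the feasibility hypothesis are points the paper leaves implicit.
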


\begin{proof}
Lagrange function for this problem has form
\begin{equation}\label{l0}
\begin{gathered}
L_0(p^C,\eta,\lambda,\overline\sigma,\underline\sigma,\phi,\overline\mu,\underline\mu)=\frac{1}{2}(p^C)^TWp^C+\lambda^T(CBC^T\eta-p^C-p^D)+\\
+\overline\sigma^TW(p^C-\overline p^C)+\underline\sigma^TW(\underline p^C-p^C)+\phi^T(SBC^T\eta-\psi)+\\
+\overline\mu^T(BC^T\eta-\overline p)+\underline\mu^T(\underline p-BC^T\eta)\\
\end{gathered}
\end{equation}
KKT condition for this function has the following form:
\begin{subequations}\label{kkt}
\begin{align}
&Wp^C-\lambda+W\overline\sigma-W\underline\sigma=0,\label{kkt1}\\
&CB(C^T\lambda+\overline\mu-\underline\mu+S^T\phi)=0,\label{kkt2}\\
&CBC^T\eta-p^C-p^D=0,\label{kkt3}\\
&\overline\sigma_i(p^C_i - \overline p^C_i)=0,\;i=\overline{1,n},\label{kkt4}\\
&p^C - \overline p^C\le0,\label{kkt5}\\
&\underline\sigma_i(\underline p^C_i-p^C_i)=0,\;i=\overline{1,n},\label{kkt6}\\
&\underline p^C-p^C\le0,\label{kkt7}\\
&\overline\sigma\ge0,\;\underline\sigma\ge0,\label{kkt8}\\
&\overline\mu_j((BC^T\eta)_j-\overline p_j)=0,\;j=\overline{1,m},\label{kkt9}\\
&BC^T\eta-\overline p\le0,\label{kkt10}\\
&\underline\mu_j(\underline p_j-(BC^T\eta)_j)=0,\;j=\overline{1,m},\label{kkt11}\\
&\underline p-BC^T\eta\le0,\label{kkt12}\\
&\overline\mu\ge0,\;\underline\mu\ge0,\label{kkt13}\\
&SBC^T\eta-\psi=0.\label{kkt14}
\end{align}
\end{subequations}
If we set $u=p^C+\overline\sigma-\underline\sigma$, then equations (\ref{kkt1})-(\ref{kkt8}) are equivalent to (\ref{ep1})-(\ref{ep4}). Additionally equations (\ref{ep5}), (\ref{ep6}) represent complementary slackness (\ref{kkt9}), (\ref{kkt11}), positiveness of dual variables (\ref{kkt13}) and inequality constraints (\ref{kkt10}) and (\ref{kkt12}). Furthermore equations (\ref{ep7}) and (\ref{kkt14}) are equivalent.
\end{proof}

\begin{lem}
If power reserve is sufficient to compensate the disturbance ($\sum_{i=1}^n\underline p^C_i\le-\sum_{i=1}^n p^D_i\le\sum_{i=1}^n\overline p^C_i$), then in equilibrium
$$\sum_{i=1}^n (p^C_i)^*=-\sum_{i=1}^n p^D_i.$$
\end{lem}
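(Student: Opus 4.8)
The plan is to derive the claim by summing the equilibrium relation (\ref{ep4}) over all buses. The crucial structural fact is that $\rho^{T}C=0$, since $C$ is an incidence matrix, so that $\rho^{T}C\chi^{*}=0$ for any $\chi^{*}$. Hence, if the diagonal gain matrix $\Gamma^{u^{*}}$ were the identity, summing (\ref{ep4}) would give $-\sum_{i}(p^{C}_{i})^{*}-\sum_{i}p^{D}_{i}=0$ at once, which is exactly the desired identity. The entire difficulty is therefore confined to the components for which $\gamma^{u}_{i}=0$, i.e. those where the auxiliary variable $u^{*}_{i}$ saturates one of the artificial bounds $\overline K^{u}_{i}$ or $\underline K^{u}_{i}$.

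To control these saturations I would first exploit (\ref{ep1}), rewritten as $CBC^{T}u^{*}=CB(-\overline\mu^{*}+\underline\mu^{*}-S^{T}\phi^{*})$. The matrix $CBC^{T}$ is a weighted graph Laplacian whose null space is $\mathrm{span}(\rho)$, so I decompose $u^{*}=\beta\rho+u^{*}_{\perp}$ with $u^{*}_{\perp}\perp\rho$. The orthogonal part $u^{*}_{\perp}$ is the pseudo-inverse image of the right-hand side, whose magnitude is bounded by a network-dependent constant times $\max\{K^{\overline\mu},K^{\underline\mu},|\overline K^{\phi}|,|\underline K^{\phi}|\}$. By the separation-of-scales design inequalities (\ref{kk_l})--(\ref{kc_l}) this bound is negligible compared with $\overline K^{u}$ and $|\underline K^{u}|$. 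Consequently every $u^{*}_{i}$ lies within a negligible distance of the common value $\beta$, and in particular two components cannot saturate opposite bounds simultaneously: their difference would be of order $\overline K^{u}+|\underline K^{u}|$, far exceeding $2\|u^{*}_{\perp}\|_{\infty}$.

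With this in hand I would split into three cases. If no component saturates then $\Gamma^{u^{*}}=I$ and the first paragraph finishes the proof. If some component saturates at an upper bound, then $\beta$ is large and positive, so $u^{*}_{j}/w_{j}>\overline p^{C}_{j}$ and hence $(p^{C}_{j})^{*}=\overline p^{C}_{j}$ for every $j$ by (\ref{ep3}) and (\ref{kc_l}); thus $\sum_{j}(p^{C}_{j})^{*}=\sum_{j}\overline p^{C}_{j}$. Summing (\ref{ep4}) and using $\rho^{T}C\chi^{*}=0$ together with the sign condition in (\ref{kul}) for the switched-off (upper-saturated) components gives $-\sum_{i}(p^{C}_{i})^{*}-\sum_{i}p^{D}_{i}=\sum_{\gamma^{u}_{i}=0}\big[(C\chi^{*})_{i}-(p^{C}_{i})^{*}-p^{D}_{i}\big]\ge0$, whence $-\sum_{i}p^{D}_{i}\ge\sum_{j}\overline p^{C}_{j}$. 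Combined with the reserve hypothesis $-\sum_{i}p^{D}_{i}\le\sum_{j}\overline p^{C}_{j}$ this forces equality, so $\sum_{i}(p^{C}_{i})^{*}=-\sum_{i}p^{D}_{i}$. The lower-saturation case is symmetric, using $\sum_{i}\underline p^{C}_{i}\le-\sum_{i}p^{D}_{i}$, and again yields the identity.

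The main obstacle is precisely this control of the auxiliary saturations: the whole argument rests on the decomposition $u^{*}=\beta\rho+u^{*}_{\perp}$ and on the quantitative smallness of $u^{*}_{\perp}$ relative to the artificial limits $K^{u}$, which is what the scale-separation inequalities (\ref{kk_l})--(\ref{kc_l}) are designed to guarantee. Once mixed saturation is excluded and the ``all-at-the-bound'' control pattern is established in the saturated cases, the reserve hypothesis combines with the sign conditions of $\Gamma^{u^{*}}$ to pin the total control injection exactly to $-\sum_{i}p^{D}_{i}$.
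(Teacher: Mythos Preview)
Your proposal is correct and follows essentially the same path as the paper: both sum (\ref{ep4}) using $\rho^{T}C=0$, invoke the representation $u^{*}=(CBC^{T})^{+}CB(-\overline\mu^{*}+\underline\mu^{*}-S^{T}\phi^{*})+\beta\rho$ together with the scale-separation conditions (\ref{kk_l})--(\ref{kc_l}) to conclude that if one coordinate of $u^{*}$ saturates an artificial bound then every $(p^{C}_{i})^{*}$ is pinned to the corresponding control limit. The only cosmetic difference is that the paper finishes by contradiction (a saturated equilibrium would force some index $i_{1}$ with $\Gamma^{u^{*}}_{i_{1}}=1$ yet $(C\chi^{*})_{i_{1}}-(p^{C}_{i_{1}})^{*}-p^{D}_{i_{1}}\neq0$), whereas you squeeze with the reserve hypothesis to obtain the identity directly; the latter has the minor advantage of transparently covering the boundary case $-\sum_i p^{D}_{i}=\sum_i\overline p^{C}_{i}$.
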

\begin{proof}
If $\Gamma^u=I$, then sum of equations (\ref{ep4}) we get $\sum_{i=1}^n p^C_i=-\sum_{i=1}^n p^D_i$.
If there exists $i_0\in\overline{1,n}$ such that $\Gamma^(u^*)_{i_0} = 0$, then either $u^*_{i_0}=\overline K^u_{i_0}$ or $u^*_{i_0}=\underline K^u_{i_0}$. From (\ref{ep1}) $u=(CBC^T)^+CB(-\overline\mu^*+\underline\mu^*-S^T\phi^*)+\rho\beta$, $\beta\in\mathbb{R}$. Therefore
\begin{equation}\label{clmax}
|u_{i_0}^*-u^*_i|\le\|(CBC^T)^+CB(-\overline\mu^*+\underline\mu^*-S^T\phi^*)\|,\;i=\overline{1,n}.
\end{equation}
Here $0\le\overline\mu^*\le K^{\overline\mu}$, $0\le\underline\mu^*\le K^{\underline\mu}$, $\underline K^\phi\le\phi^*\le\overline K^\phi$, therefore form (\ref{kk_l}), (\ref{kc_l}) if $u_{i_0}^*=\underline K^u$, then $\max_i|u_{i_0}^*-u_i^*|<\underline p^C_i-\underline K^u_{i_0},\;i=\overline{1,n},$ and $u^*<\underline p^C.$
As a result, $\sum_{i=1}^n (p^C_i)^*<\sum_{i=1}^n p^D_i$ and $\sum_{i=1}^n (p^C)^*_i+\sum_{i=1}^n p^D_i<0$. Since $\sum_{i=1}^n(CBC\eta)_i-p^C_i-p^D_i=-\sum_{i=1}^n p^C_i-\sum_{i=1}^n p^D_i$, there exists $i_1$ such that $(CBC\eta)_{i_1}-p^C_{i_1}-p^D_{i_1}>0$, therefore $\Gamma^u_{i_1}=1$ and equation $i_1$ in (\ref{ep4}) is not equal $0$.
Similar contradiction can be shown for the case, when $u^*_{i_0}=\overline K^u_{i_0}$.
\end{proof}

\begin{lem}\label{stp}
System (\ref{cnt}) is stable.
\end{lem}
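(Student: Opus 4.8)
The plan is to exhibit a Lyapunov function that measures the distance to the equilibrium $(\eta^*,u^*,\overline\mu^*,\underline\mu^*,\phi^*)$ (which exists by the first lemma) and to show that the switched, saturated primal--dual structure of (\ref{cnt}) makes its derivative non-positive, after which an invariance argument gives convergence. Writing $\tilde\eta=\eta-\eta^*$, $\tilde u=u-u^*$, $\tilde{\overline\mu}=\overline\mu-\overline\mu^*$, $\tilde{\underline\mu}=\underline\mu-\underline\mu^*$, $\tilde\phi=\phi-\phi^*$ and $\tilde\chi=\chi-\chi^*=BC^T\tilde\eta$, I would take
\[
V=\tfrac12\bigl(\|\tilde\eta\|^2+\|\tilde u\|^2+\|\tilde{\overline\mu}\|^2+\|\tilde{\underline\mu}\|^2+\|\tilde\phi\|^2\bigr).
\]
First I would carry out the primal--dual cancellation as if all projections were inactive. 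Subtracting the equilibrium identity (\ref{ep1}) gives $\dot{\tilde\eta}=CB(-C^T\tilde u-\tilde{\overline\mu}+\tilde{\underline\mu}-S^T\tilde\phi)$; since $\tilde\eta^TCB=\tilde\chi^T$ and $CBC^T$ is symmetric, the term $-\tilde\eta^TCBC^T\tilde u$ in $\tilde\eta^T\dot{\tilde\eta}$ cancels exactly against $\tilde u^TCBC^T\tilde\eta=\tilde u^TC\tilde\chi$ coming from $\dot u$, and the couplings $\tilde\chi^T\tilde{\overline\mu}$, $\tilde\chi^T\tilde{\underline\mu}$, $(S\tilde\chi)^T\tilde\phi$ between $\dot{\tilde\eta}$ and the three dual equations cancel likewise. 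In the interior all cross terms thus vanish and only a dissipative control contribution survives.

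The key technical step is to show that none of the nonsmooth elements—the saturation in (\ref{cnt3}) and the switching matrices $\Gamma^u,\Gamma^{\overline\mu},\Gamma^{\underline\mu},\Gamma^\phi$—can spoil this sign. For the control I would use that $p^C=[W^{-1}u]^{\overline p^C}_{\underline p^C}$ is a componentwise nondecreasing function of $u$, so $\tilde u_i\bigl(p^C_i-(p^C_i)^*\bigr)\ge0$ and the surviving term $-\tilde u^T(p^C-(p^C)^*)$ is $\le0$, equal to $-\tilde u^TW^{-1}\tilde u$ on the unsaturated set. For each projected dual variable I would verify, directly from the definition of the corresponding $\Gamma$, that a coordinate is frozen ($\gamma=0$) only when the unprojected flow points out of its box; for instance $\gamma^{\overline\mu}_j=0$ forces $\tilde{\overline\mu}_j(\chi_j-\overline p_j)\ge0$, whence
\[
\tilde{\overline\mu}^T\Gamma^{\overline\mu}(\chi-\overline p)\le\tilde{\overline\mu}^T(\chi-\overline p)=\tilde{\overline\mu}^T\tilde\chi+\tilde{\overline\mu}^T(\chi^*-\overline p),
\]
and analogously for $\underline\mu$, $\phi$ and $u$. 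After the cancellation of the $\tilde{\overline\mu}^T\tilde\chi$-type terms, the only leftover per channel is an ``equilibrium residual'' such as $\tilde{\overline\mu}^T(\chi^*-\overline p)$, which complementary slackness (\ref{ep5})--(\ref{ep7}) together with the bounds $0\le\overline\mu^*\le K^{\overline\mu}$, $\underline K^\phi\le\phi^*\le\overline K^\phi$ and $\underline K^u\le u^*\le\overline K^u$ force to be $\le0$ term by term; the same bookkeeping applied to $\Gamma^u$ handles the residual $\tilde u^T\bigl(C\chi^*-(p^C)^*-p^D\bigr)$. Collecting everything yields $\dot V\le-\tilde u^T(p^C-(p^C)^*)\le0$.

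Finally, since $\dot V\le0$ the sublevel sets are invariant and trajectories are bounded—the $K$-bounds keep $u,\overline\mu,\underline\mu,\phi$ bounded and $V$ bounds $\tilde\eta$, with $\rho^T\eta$ conserved because $\rho^TCB=0$—so a nonsmooth LaSalle invariance principle applies and every solution converges to the largest invariant subset of $\{\dot V=0\}$. On that set $\tilde u^T(p^C-(p^C)^*)=0$ forces $p^C=(p^C)^*$, and propagating this equality through (\ref{cnt}) where all derivatives vanish recovers exactly the equilibrium relations (\ref{ep}); hence the invariant set is the equilibrium set and the system is asymptotically stable. I expect the main obstacle to be precisely the second step: the disciplined, case-by-case verification that every switching coordinate of the four matrices $\Gamma$ together with the saturation contributes non-positively, and that the equilibrium residuals are annihilated by complementary slackness and by the ordering of the bounds (\ref{kk_l})--(\ref{kc_l}); making the LaSalle argument rigorous for the discontinuous right-hand side and treating the nonuniqueness of $\eta^*$ along the all-ones direction are the remaining delicate points.
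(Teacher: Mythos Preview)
Your proposal is correct and follows essentially the same route as the paper: the identical quadratic Lyapunov function $V=\tfrac12(\|\tilde\eta\|^2+\|\tilde u\|^2+\|\tilde{\overline\mu}\|^2+\|\tilde{\underline\mu}\|^2+\|\tilde\phi\|^2)$, the same primal--dual cancellation using (\ref{ep1}), and the same mechanism---monotonicity of the saturation for the $u$--block and complementary slackness at the equilibrium for the $\overline\mu,\underline\mu,\phi$--blocks---to make each residual nonpositive. The difference is purely presentational: the paper writes $\dot V^0=\Upsilon_1+\overline\Upsilon_2+\underline\Upsilon_2+\Upsilon_3$ and kills every summand by an exhaustive case enumeration (three top-level cases for $\Upsilon_1$ depending on whether the power reserve suffices, and three cases with subcases for each dual block), whereas you package the same facts as the two generic inequalities ``$\gamma_j=0\Rightarrow$ the omitted term is $\ge0$'' and ``the equilibrium residual has the right sign by (\ref{ep4})--(\ref{ep7}) and the box bounds''. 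One minor point: Lemma~\ref{stp} in the paper asserts only Lyapunov stability ($\dot V\le0$); your LaSalle step and the conclusion $p^C\to(p^C)^*$ correspond to the \emph{next} lemma, so strictly speaking you are proving more than the statement requires.
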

\begin{proof}
Let us introduce the following Lyapunov function \cite{D}
\begin{equation*}
\begin{gathered}
V^0(\eta,u,\overline\mu,\underline\mu,\phi)=\frac{1}{2}((\eta-\eta^*)^T(\eta-\eta^*)+\\
+(u-u^*)^T(u-u^*)+\\+(\overline\mu-\overline\mu^*)^T(\overline\mu-\overline\mu^*)+(\underline\mu-\underline\mu^*)^T(\underline\mu-\underline\mu^*)+\\
+(\phi-\phi^*)^T(\phi-\phi^*)).
\end{gathered}
\end{equation*}
Since $CB\left(-C^Tu^*-\overline\mu^*+\underline\mu^*+S^T\phi^*\right)=0$, $\dot V$ can be split into 4 sums:
\begin{equation*}
\begin{gathered}
\dot V^0(\eta,u,\overline\mu,\underline\mu,\phi)=\Upsilon_1(\eta,u,\overline\mu,\underline\mu,\phi)+\\
+\overline\Upsilon_2(\eta,u,\overline\mu,\underline\mu,\phi)+\underline\Upsilon_2(\eta,u,\overline\mu,\underline\mu,\phi)+\Upsilon_3(\eta,u,\overline\mu,\underline\mu,\phi),
\end{gathered}
\end{equation*}
where
\begin{equation*}
\begin{gathered}
\Upsilon_1(\eta,u,\overline\mu,\underline\mu,\phi)=(u^*-u)^T(C\chi-C\chi^*)-\\-(u^*-u)^T(C\chi-p^C-p^D),
\end{gathered}
\end{equation*}
\begin{equation*}
\begin{gathered}
\overline\Upsilon_2(\eta,u,\overline\mu,\underline\mu,\phi)=(\overline\mu^*-\overline\mu)^T(\chi-\chi^*)-\\-(\overline\mu^*-\overline\mu)^T\Gamma^{\overline\mu}(\chi-\overline p),
\end{gathered}
\end{equation*}
\begin{equation*}
\begin{gathered}
\underline\Upsilon_2(\eta,u,\overline\mu,\underline\mu,\phi)=(\underline\mu-\underline\mu^*)^T(\chi-\chi^*)-\\-(\underline\mu-\underline\mu^*)^T\Gamma^{\underline\mu}(\chi-\underline p),
\end{gathered}
\end{equation*}
\begin{equation*}
\begin{gathered}
\Upsilon_3(\eta,u,\overline\mu,\underline\mu,\phi)=(\phi^*-\phi)^T(S^T\chi-S^T\chi^*)-\\-(\phi^*-\phi)^T\Gamma^{\phi}(S^T\chi-\psi),
\end{gathered}
\end{equation*}

Let us show, that $\Upsilon_1(\eta,u,\overline\mu,\underline\mu,\phi)\le0$.
\begin{enumerate}
\item If power reserve is sufficient to compensate the disturbance ($\sum_{i=1}^n\underline p^C_i\le-\sum_{i=1}^n p^D_i\le\sum_{i=1}^n\overline p^C_i$), then in equilibrium $p^D=C\chi^*-(p^C)^*$. Therefore $\Upsilon_1(\eta,u,\overline\mu,\underline\mu,\phi)=-(u-u^*)^T\Gamma^u(p^C-(p^C)^*)+(u-u^*)^T(\Gamma^u-I)(C\chi-C\chi^*)$. Note that if $u_i\le u^*_i$ then $p^C_i\le (p^C_i)^*$ and vice versa ($i\in\overline{1,n}$), therefore the first quadratic form is non positive. Let us show  summand of the second quadratic form $(u_i-u^*_i)(\Gamma^u_i-1)((C\chi)_i-(C\chi^*)_i)$ is non positive for any $i=\overline{1,n}$.
    \begin{enumerate}
    \item If $u_i=\overline K^u_i$ and $(C\chi)_i-p^C_i-p^D_i\ge0$, then $u_i-u_i^*\ge0$, $\Gamma^u_i=0$, and $((C\chi)_i-(C\chi^*)_i)\ge p^C_i-(p^C_i)^*\ge0$, therefore the summand is non positive.
    \item If $u_i=\underline K^u_i$ and $(C\chi)_i-p^C_i-p^D_i\le0$, then $u_i-u_i^*\le0$, $\Gamma^u_i=0$, and $((C\chi)_i-(C\chi^*)_i)\le p_i^C-(p_i^C)^*\le0$, therefore the summand is non positive.
    \item Otherwise $\Gamma^u_i=1$ and the summand is equal $0$.
    \end{enumerate}
\item If $\sum_{i=1}^n p^D_i\ge\sum_{i=1}^n\overline p^C_i$, then $u^*=\overline K^u$, and $C\chi^*-(p^C)^*-p^D\ge0$. As before first quadratic form is non positive. Any summand $(u_i-u^*_i)(\Gamma^u_i-1)((C\chi)_i-(C\chi^*)_i)$ of the second quadratic form has the following properties:
    \begin{enumerate}
    \item If $u_i=\overline K^u_i=u^*_i$, then the summand is equal $0$.
    \item If $u_i=\underline K^u_i$ and $(C\chi)_i-p^C_i-p^D_i\le0$, then $u-u^*\le0$, $\Gamma^u_i=0$, and $((C\chi)_i-(C\chi^*)_i)\le p^C-(p^C)^*\le0$, therefore the summand is non positive.
    \item Otherwise $\Gamma^u_i=1$ and the summand is equal $0$.
    \end{enumerate}
\item If $\sum_{i=1}^n\underline p^C_i\ge\sum_{i=1}^n p^D_i$, then $u^*=\underline K^u$, and $C\chi^*-(p^C)^*-p^D\le0$. First quadratic form is non positive. Any summand $(u_i-u^*_i)(\Gamma^u_i-1)((C\chi)_i-(C\chi^*)_i)$ of the second quadratic form has the following properties:
    \begin{enumerate}
    \item If $u_i=\overline K^u_i$ and $(C\chi)_i-p^C_i-p^D_i\ge0$, then $u-u^*\ge0$, $\Gamma^u_i=0$, and$((C\chi)_i-(C\chi^*)_i)\ge p^C-(p^C)^*\ge0$, therefore the summand is equal $0$.
    \item If $u_i=\underline K^u_i=u^*_i$, then the summand is equal $0$.
    \item Otherwise $\Gamma^u_i=1$ and the summand is equal $0$.
    \end{enumerate}
\end{enumerate}

Let us show, that $\overline\Upsilon_2(\eta,u,\overline\mu,\underline\mu,\phi)\le0$. Here $\overline\Upsilon_2(\eta,u,\overline\mu,\underline\mu,\phi)=\sum_{j=1}^m(\overline \mu^*_j-\mu_j)(\chi_j-\chi^*_j)-(\overline\mu^*_j-\mu_j)\Gamma^{\overline\mu}_j(\chi_j-\overline p_j)=\sum_{j=1}^n\overline\upsilon_j$. We will show that each summand is non positive. For sum $j$ we have the following cases:
\begin{enumerate}
\item If $\chi_j<\overline p_j$ and $\overline\mu_j=0$, then $\Gamma^{\overline\mu}_j=0$ and $\overline\upsilon_j=\overline\mu^*_j(\chi-\chi^*)$.
    \begin{enumerate}
    \item If $\chi^*_j<\overline p_j$, then $\overline\mu^*_j=0$ and $\overline\upsilon_j=0$.
    \item If $\chi^*_j\ge\overline p_j>\chi_j$, then $\overline\upsilon_j\le0$, since by definition $\overline\mu^*\ge0$.
    \end{enumerate}
\item If $\chi_j>\overline p_j$ and $\overline\mu_j=K^{\overline\mu}_j$, then $\Gamma^{\overline\mu}_j=0$ and $\overline\upsilon_j=(\overline\mu^*_j-\overline\mu_j)(\chi-\chi^*)$.
    \begin{enumerate}
    \item If $\chi^*_j\le\overline p_j$, then $\overline\mu_j\ge\overline\mu^*_j$ and $\chi_j\ge\chi^*_j$, therefore $\overline\upsilon_j\le0$.
    \item If $\chi^*_j>\overline p_j$, then $\overline\mu_j=\overline\mu^*_j$, therefore $\overline\upsilon_j=0$.
    \end{enumerate}
\item In all other cases $\Gamma^{\overline\mu}_j=1$ and $\overline\upsilon_j=(\overline\mu^*_j-\overline\mu_j)(\overline p_j-\chi^*_j)$.
    \begin{enumerate}
    \item If $\chi_j^*<\overline p_j$, then $\overline\mu^*_j=0$ and $\overline\upsilon_j=-\overline\mu_j(\overline p_j-\chi^*_j)\le0$, since $\overline\mu_j\le0$ by definition.
    \item If $\chi_j^*=\overline p_j$, then $\overline\upsilon_j=0$.
    \item If $\chi_j^*>\overline p_j$, then $\overline\mu^*_j=K^{\overline\mu}_j\ge\overline\mu_j$, therefore $\overline\upsilon_j\le0$.
    \end{enumerate}
\end{enumerate}

Let us show, that $\underline\Upsilon_2(\eta,u,\overline\mu,\underline\mu,\phi)\le0$. Here $\underline\Upsilon_2(\eta,u,\overline\mu,\underline\mu,\phi)=\sum_{j=1}^m(\underline \mu^*_j-\mu_j)(\chi_j-\chi^*_j)-(\underline\mu^*_j-\mu_j)\Gamma^{\underline\mu}_j(\chi_j-\underline p_j)=\sum_{j=1}^n\underline\upsilon_j$. We will show that each summand is non positive. For sum $j$ we have the following cases:
\begin{enumerate}
\item If $\chi_j>\underline p_j$ and $\underline\mu_j=0$, then $\Gamma^{\underline\mu}_j=0$ and $\underline\upsilon_j=-\underline\mu^*_j(\chi-\chi^*)$.
    \begin{enumerate}
    \item If $\chi^*_j>\underline p_j$, then $\underline\mu^*_j=0$ and $\overline\upsilon_j=0$.
    \item If $\chi^*_j\le\underline p_j<\chi_j$, then $\underline\upsilon_j\le0$, since by definition $\underline\mu^*\ge0$.
    \end{enumerate}
\item If $\chi_j<\underline p_j$ and $\underline\mu_j=K^{\underline\mu}_j$, then $\Gamma^{\underline\mu}_j=0$ and $\underline\upsilon_j=(\underline\mu_j-\underline\mu^*_j)(\chi-\chi^*)$.
    \begin{enumerate}
    \item If $\chi^*_j\ge\underline p_j$, then $\underline\mu_j\ge\underline\mu^*_j$ and $\chi_j\le\chi^*_j$, therefore $\underline\upsilon_j\le0$.
    \item If $\chi^*_j<\underline p_j$, then $\underline\mu_j=\underline\mu^*_j$, therefore $\underline\upsilon_j=0$.
    \end{enumerate}
\item In all other cases $\Gamma^{\underline\mu}_j=1$ and $\underline\upsilon_j=(\underline\mu_j-\underline\mu^*_j)(\underline p_j-\chi^*_j)$.
    \begin{enumerate}
    \item If $\chi_j^*>\underline p_j$, then $\underline\mu^*_j=0$ and $\underline\upsilon_j=\overline\mu_j(\underline p_j-\chi^*_j)\le0$, since $\overline\mu_j\le0$ by definition.
    \item If $\chi_j^*=\underline p_j$, then $\underline\upsilon_j=0$.
    \item If $\chi_j^*<\underline p_j$, then $\underline\mu^*_j=K^{\underline\mu}_j\ge\underline\mu_j$, therefore $\underline\upsilon_j\le0$.
    \end{enumerate}
\end{enumerate}

Let us show, that $\Upsilon_3(\eta,u,\overline\mu,\underline\mu,\phi)\le0$. Here $\Upsilon_3(\eta,u,\overline\mu,\underline\mu,\phi)=\sum_{k=1}^\xi(\phi^*_k-\phi_k)((S^T\chi)_k-(S^T\chi^*)_k)-(\phi^*_k-\phi_k)\Gamma^\phi_k((S^T\chi)_k-\psi_k)=\sum_{k=1}^\xi\upsilon_k$. We will show that each summand is non positive. For sum $k$ we have the following cases:
\begin{enumerate}
\item If $(S^T\chi)_k\ge\psi_k$ and $\phi_k=\overline K^\phi_k$, then $\Gamma^\phi_k=0$ and $\upsilon_k=(\phi^*_k-\phi_k)((S^T\chi)_k-(S^T\chi^*)_k)$.
    \begin{enumerate}
    \item If $(S^T\chi^*)_k>\psi_k$, then $\phi^*_k=\overline K^\phi_k$ and $\upsilon_k=0$.
    \item If $(S^T\chi^*)_k<\psi_k$, then $\psi^*_k=\underline K^\phi_k$ and $\upsilon_k\le0$, since $\phi^*_k<\phi_k$.
    \item If $(S^T\chi^*)_k=\psi_k\le(S^T\chi)_k$, then $\upsilon_k\le0$, since $\phi^*_k<\phi_k$.
    \end{enumerate}
\item If $(S^T\chi)_k\le\psi_k$ and $\phi_k=\underline K^\phi_k$, then $\Gamma^\phi_k=0$ and $\upsilon_k=(\phi^*_k-\phi_k)((S^T\chi)_k-(S^T\chi^*)_k)$.
    \begin{enumerate}
    \item If $(S^T\chi^*)_k>\psi_k$, then $\phi^*_k=\overline K^\phi_k$ and $\upsilon_k\le0$, since $\phi^*_k>\phi_k$.
    \item If $(S^T\chi^*)_k<\psi_k$, then $\phi^*_k=\underline K^\phi_k$ and $\upsilon_k=0$.
    \item If $(S^T\chi^*)_k=\psi_k\ge(S^T\chi)_k$, then $\upsilon_k\le0$, since $\phi^*_k>\phi_k$.
    \end{enumerate}
\item In all other cases $\Gamma^\phi_k=1$ and $\upsilon_k=(\phi^*_k-\phi_k)(\psi_k-(S^T\chi^*)_k)$.
    \begin{enumerate}
    \item If $(S^T\chi^*)_k>\psi_k$, then $\phi^*_k=\overline K^\phi_k$ and $\upsilon_k\le0$, since $\phi^*_k\ge\phi_k$.
    \item If $(S^T\chi^*)_k<\psi_k$, then $\psi^*_k=\underline K^\phi_k$ and $\upsilon_k\le0$, since $\phi^*_k\le\phi_k$.
    \item If $(S^T\chi^*)_k=\psi_k\ge(S^T\chi)_k$, then $\upsilon_k\le0$.
    \end{enumerate}
\end{enumerate}
\end{proof}
\begin{lem}\label{asp}
In system (\ref{cnt}) control variables $p^C$ converge to their equilibrium values:
$$\lim_{t\rightarrow\infty}p^C_i(t)=(p^C_i)^*,\;i=\overline{1,n}.$$
\end{lem}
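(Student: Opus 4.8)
The plan is to upgrade the Lyapunov estimate of Lemma \ref{stp} to a convergence statement via LaSalle's invariance principle. The function $V^0$ is non-negative and radially unbounded, and along (\ref{cnt}) it satisfies $\dot V^0\le 0$; moreover the box variables $u,\overline\mu,\underline\mu,\phi$ stay in the compact sets fixed by their $K$-bounds, and $V^0$ non-increasing keeps $\eta$ bounded as well. Hence every trajectory lies in a compact sublevel set, $V^0(t)$ decreases to a limit, and the $\omega$-limit set of each trajectory is non-empty, compact, invariant, and contained in the largest invariant set $M$ on which $\dot V^0=0$. Because the right-hand side of (\ref{cnt}) is only piecewise defined through the switches $\Gamma^u,\Gamma^{\overline\mu},\Gamma^{\underline\mu},\Gamma^\phi$, I would invoke the form of the invariance principle valid for the projected (Carath\'eodory) dynamics; equivalently one can use that $-\dot V^0\ge 0$ is integrable in time, since $\int_0^\infty(-\dot V^0)\,dt=V^0(0)-V^0_\infty<\infty$, and apply Barbalat's lemma to its uniformly continuous part. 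Verifying this regularity across the switching surfaces is the one technical point to be checked.

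Next I would characterise $M$. In the proof of Lemma \ref{stp} the derivative splits as $\dot V^0=\Upsilon_1+\overline\Upsilon_2+\underline\Upsilon_2+\Upsilon_3$, and each of the four terms was shown to be non-positive; therefore on $M$ all four vanish identically. The term that controls the control variables is $\Upsilon_1$. Its first quadratic form is $-\sum_i\Gamma^u_i(u_i-u_i^*)(p^C_i-(p^C_i)^*)$, and since $p^C_i=[W^{-1}u]^{\overline p^C}_{\underline p^C}$ is a nondecreasing function of $u_i$ (with $W\succ 0$), every product $(u_i-u_i^*)(p^C_i-(p^C_i)^*)$ is non-negative and vanishes exactly when $p^C_i=(p^C_i)^*$. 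For the indices with $\Gamma^u_i=1$ this already gives $p^C_i=(p^C_i)^*$. For the indices with $\Gamma^u_i=0$ the first form is automatically zero, and there the vanishing of the second quadratic form of $\Upsilon_1$, combined with the sign relations $(C\chi)_i-(C\chi^*)_i\ge p^C_i-(p^C_i)^*\ge 0$ established case-by-case in the proof of Lemma \ref{stp}, again forces $p^C_i=(p^C_i)^*$. Hence $p^C=(p^C)^*$ everywhere on $M$.

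It then follows that $p^C(t)\to(p^C)^*$: the trajectory approaches $M$, the map $p^C=[W^{-1}u]^{\overline p^C}_{\underline p^C}$ is continuous in the state, and it is constant and equal to $(p^C)^*$ on $M$. I would also note that $(p^C)^*$ is independent of the particular equilibrium used to build $V^0$: by the earlier optimality lemma the equilibria of (\ref{ep}) coincide with KKT points of the strictly convex program with objective (\ref{op}), whose minimiser in $p^C$ is unique because $W\succ 0$, so all equilibria share the same $(p^C)^*$ and the limit is well defined.

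The \emph{main obstacle} is precisely that the equilibrium state is not unique: $CBC^T$ is singular with null space $\operatorname{span}(\rho)$, so $u^*+\rho\beta$ and the corresponding $\eta^*$ are all equilibria and neither $u$ nor $\eta$ need converge to a single point. LaSalle can therefore only deliver convergence to a set, and the real content is to show that the output $p^C$, a saturated and hence many-to-one image of $u$, converges even though the underlying state does not. The delicate indices are those where the saturation in (\ref{cnt3}) or the artificial bounds $\overline K^u,\underline K^u$ are active ($\Gamma^u_i=0$): there the monotonicity argument is vacuous and one must lean on the bound separations (\ref{kk_l})--(\ref{kc_l}) and the equilibrium relations (\ref{ep}) to pin $p^C_i$ to the correct control limit, exactly as carried out within the case analysis of Lemma \ref{stp}.
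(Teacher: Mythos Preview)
Your proposal is correct and follows essentially the same approach as the paper: apply LaSalle to $V^0$ from Lemma~\ref{stp}, and show that $\Upsilon_1=0$ forces $p^C=(p^C)^*$ on the invariant set. The paper is terser; it compresses your two-case analysis into the single bound $\Upsilon_1\le-(u-u^*)^T(p^C-(p^C)^*)$ (without the $\Gamma^u$ factor, because on the $\Gamma^u_i=0$ indices the chain $(C\chi)_i-(C\chi^*)_i\ge p^C_i-(p^C_i)^*$ already dominates), and then observes that $(u_i-u_i^*)(p^C_i-(p^C_i)^*)=0$ happens exactly when $u_i=u_i^*$ or both $u_i$ and $u_i^*$ sit on the same side of the saturation interval, in either case $p^C_i=(p^C_i)^*$. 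Your added discussion of compactness, the projected-dynamics version of LaSalle, and the uniqueness of $(p^C)^*$ via strict convexity fills in points the paper leaves implicit.
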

\begin{proof}
We have $\dot V^0(\eta,u,\overline\mu,\underline\mu,\phi)=0$ only if $\Upsilon^1(\eta,u,\overline\mu,\underline\mu,\phi)=0$ but $\Upsilon^1(\eta,u,\overline\mu,\underline\mu,\phi)\le-(u-u^*)^T(p^C-(p^C)^*)$. Therefore it can be possible only when $u_i=u^*_i,\;i=\overline{1,n}$, which gives the stationary point, or when both $u_i$ and $u^*_i$ either bigger than $\overline p^C_i$ or smaller then $\underline p^C_i$, but then $p^C_i=(p^C_i)^*=\overline p^C_i$ or $p^C_i=(p^C_i)^*=\underline p^C_i$ respectively.
\end{proof}



\subsection{Disturbance estimation}
Differential system (\ref{cnt}) requires usage of disturbance vector, which is unknown. From (\ref{msys1}), (\ref{msys2}), (\ref{epe}) vector of disturbances can be estimated as is shown below:
\begin{equation*}
\begin{gathered}
p^D_G=M\dot\omega_G+D_G\omega_G+p^E_G-p^M,\\
p^D_L=D_L\omega_L+p^E_L-p^C_L.
\end{gathered}
\end{equation*}
Here deviations of frequencies $\omega$ and deviations of electrical power $p^E$ can be measured directly. Mechanical power injections $p^M$ are approximated by $\tilde p^M$ given by the following equation:
\begin{subequations}\label{lag}
\begin{align}
  \tilde T^M\dot{\tilde p}^M&=-\tilde p^M+\tilde\alpha,\\
  \tilde T^G\dot{\tilde\alpha}&=-\tilde\alpha+p^C.
\end{align}
\end{subequations}
Therefore equation (\ref{cnt4}) is replaced by
\begin{equation}\label{cnt_new}
  \dot u=\Gamma^u(C\chi-p^C-q),
\end{equation}
where $q_G=M\dot\omega_G+D_G\omega_G+p^E_G-\tilde p^M$, $q_L=D_L\omega_L+p^E_L-p^C_L$. The control scheme (\ref{cnt1})-(\ref{cnt3}), (\ref{cnt_new}), (\ref{cnt5})-(\ref{cnt7}) is shown in Figure \ref{CS_BD}. Here block B1 is describes mechanical power approximation, block B2 describes disturbance estimation $q$, block B3 describes control signal derivation, block B4 describes congestion management, block B5 describes control over inter-area flows.

\begin{figure}[!t]
      \centering
      \includegraphics[clip,scale=0.55,trim=170 0 170 0]{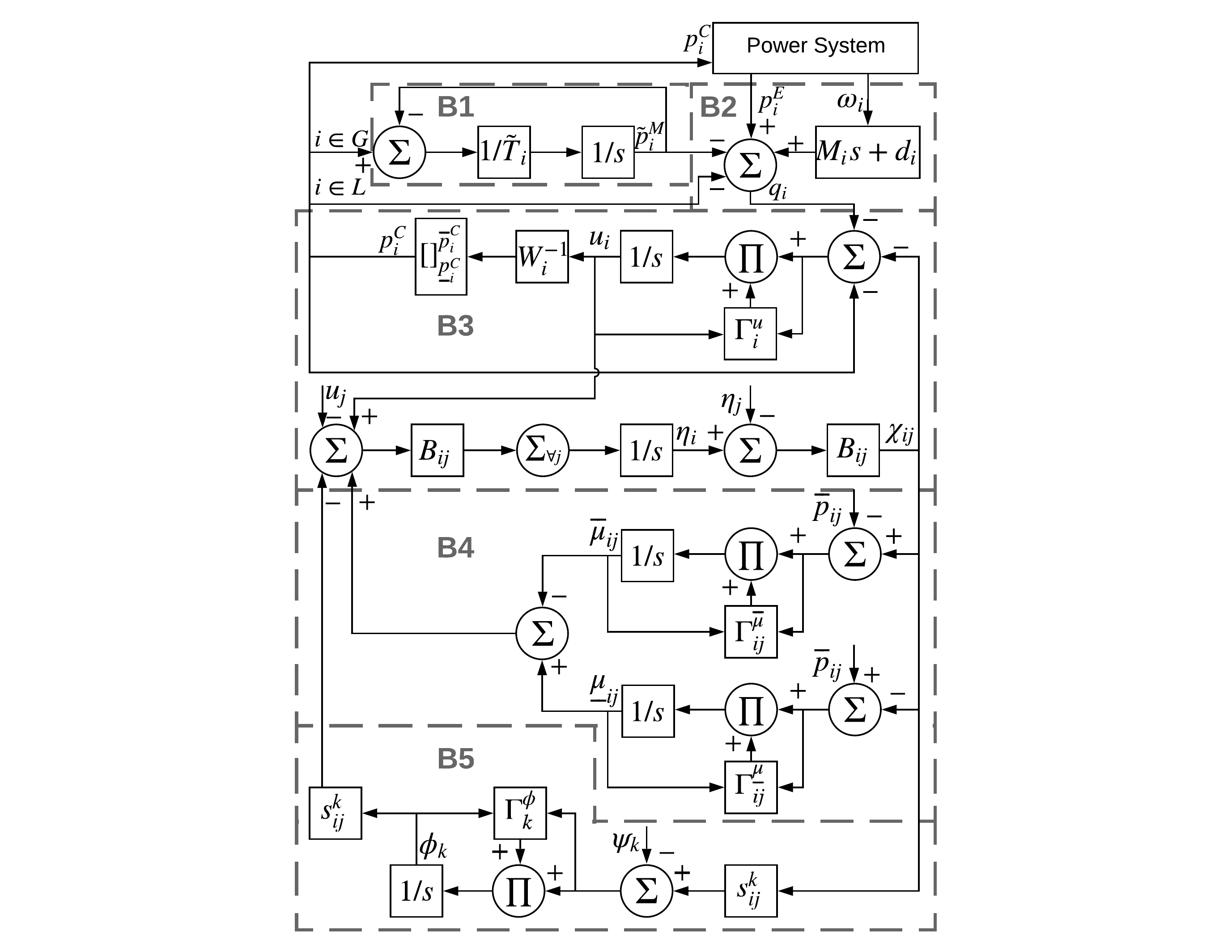}
      \caption{Control scheme block diagram}\label{CS_BD}
\end{figure}

With this change system (\ref{msys}), (\ref{cnt1})-(\ref{cnt3}), (\ref{cnt_new}), (\ref{cnt5})-(\ref{cnt7}) can be decoupled. Stability of (\ref{msys4}), (\ref{msys5}), (\ref{cnt1})-(\ref{cnt3}), (\ref{cnt_new}), (\ref{cnt5})-(\ref{cnt7}) leads to the stability of the entire system.

\begin{lem}
System (\ref{msys4}), (\ref{msys5}), (\ref{cnt1})-(\ref{cnt3}), (\ref{cnt_new}), (\ref{cnt5})-(\ref{cnt7}) is stable.
\end{lem}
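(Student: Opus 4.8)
The plan is to use the decoupling already indicated: the implementable system (\ref{cnt_new}) differs from the idealized system (\ref{cnt}) --- proved stable in Lemma~\ref{stp} --- only in that the true disturbance $p^D$ appearing in (\ref{cnt4}) is replaced by the estimate $q$. Since $q_L=p^D_L$ holds exactly and, by (\ref{msys1}) and (\ref{epe}), $q_G$ differs from $p^D_G$ only through the substitution of the true mechanical power $p^M$ by its approximation $\tilde p^M$, the discrepancy is the single vector $\delta:=q-p^D$, supported on the generator buses and equal there to $p^M-\tilde p^M$. The statement therefore reduces to (i) showing $\delta(t)\to 0$, and (ii) treating (\ref{cnt_new}) as the stable system (\ref{cnt}) driven by the vanishing perturbation $\delta$.

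First I would analyse the estimation error. Subtracting the approximator (\ref{lag}) from the true turbine and governor equations (\ref{msys4}), (\ref{msys5}), both of which are forced by the same control signal, the common input cancels: when the approximator is matched to the plant ($\tilde T^M=T^M$, $\tilde T^G=T^G$), the errors $e^\alpha:=\tilde\alpha-\alpha$ and $\delta=\tilde p^M-p^M$ obey the autonomous, diagonal, exponentially stable system $T^G\dot e^\alpha=-e^\alpha$, $T^M\dot\delta=-\delta+e^\alpha$, and more generally both are stable first-order systems with the same input and equilibrium, so $\delta$ still vanishes. Hence $\delta(t)\to 0$ exponentially, independently of the rest of the loop; this is exactly the cascade structure mentioned before the statement, with the error block on top and $q(t)\to p^D$. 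Evaluating the dynamics at an equilibrium gives $\tilde p^M=p^M$, so $\delta^\ast=0$ and the equilibrium of the combined system coincides with the point $(\eta^\ast,u^\ast,\overline\mu^\ast,\underline\mu^\ast,\phi^\ast)$ of (\ref{ep}).

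Next I would reuse the Lyapunov function $V^0$ of Lemma~\ref{stp} on the lower block. Because only (\ref{cnt4}) is altered, recomputing $\dot V^0$ along (\ref{cnt_new}) reproduces the four non-positive sums $\Upsilon_1+\overline\Upsilon_2+\underline\Upsilon_2+\Upsilon_3\le 0$ of Lemma~\ref{stp} plus a single cross term of the form $-(u-u^\ast)^T\Gamma^u\delta$ coming from $p^D\mapsto p^D+\delta$. The key structural fact is that $u$ is confined by construction to the compact box $[\underline K^u,\overline K^u]$ (and $\overline\mu,\underline\mu,\phi$ to their own boxes), so $\|u-u^\ast\|$ is uniformly bounded and the cross term is dominated by $c\,\|\delta(t)\|$ for some constant $c$. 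Thus $\dot V^0\le c\,\|\delta(t)\|$, and since $\delta\in L_1$ by the exponential decay of the first step, integration gives $V^0(t)\le V^0(0)+c\,\|\delta\|_{L_1}$; hence $V^0$, and with it every state including the unbounded coordinate $\eta$, stays bounded. A standard converging-input/cascade argument (a globally stable system perturbed by a vanishing, integrable input remains bounded and converges to the same invariant set) then yields convergence, after which the LaSalle analysis of Lemma~\ref{asp}, applied to the limiting $\delta=0$ dynamics, pins $p^C$ to its equilibrium value.

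The main obstacle is precisely this last perturbation step: once $\delta\neq 0$ the inequality $\dot V^0\le 0$ is lost and the cross term is sign-indefinite, so stability cannot be read off from $V^0$ alone. The argument must lean on two features already engineered into the scheme --- the hard saturation limits $K^u,K^{\overline\mu},K^{\underline\mu},K^\phi$ that keep every dual trajectory in a compact set, and the exponential (hence $L_1$) decay of $\delta$ --- in order to invoke a cascade/converging-input theorem rather than a single Lyapunov inequality. A secondary point needing care is that $\eta$ is not box-constrained, but its boundedness follows from the $L_1$ bound on $V^0$ above, and only the convergence of $\chi=BC^T\eta$, a bounded combination appearing in the estimates, is actually required.
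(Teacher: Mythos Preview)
Your approach is genuinely different from the paper's, and the difference is worth spelling out.

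The paper does \emph{not} treat the system as a cascade. Instead it augments the Lyapunov function of Lemma~\ref{stp} with quadratic terms in the turbine, governor and approximator states,
\[
V^1 = V^0 + \tfrac12(p^M-(p^M)^*)^TT^M(p^M-(p^M)^*) + \tfrac12(\alpha-\alpha^*)^TT^G(\alpha-\alpha^*) + \dots,
\]
and computes $\dot V^1$ directly. The extra terms produce a new piece $\Upsilon_4$; the paper then bounds $\Upsilon_1+\Upsilon_4\le y^TQy$ with $y=(p^C-(p^C)^*,\,p^M-(p^M)^*,\,\alpha-\alpha^*,\,\tilde p^M-(\tilde p^M)^*)$ and checks that the symmetrized $Q$ is block-diagonal with diagonally dominant (hence negative semidefinite) blocks. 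So $\dot V^1\le 0$ outright, and LaSalle finishes as before. No perturbation argument, no $L_1$ bound, and---importantly---no assumption that $\tilde T^M=T^M$, $\tilde T^G=T^G$: the time constants cancel in $\dot V^1$ and never enter $Q$.

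Your route is cleaner when it applies, but it leans on exactly that matching assumption. The sentence ``more generally both are stable first-order systems with the same input and equilibrium, so $\delta$ still vanishes'' is the weak point: without matched time constants the error $(\,p^M-\tilde p^M,\,\alpha-\tilde\alpha\,)$ is \emph{not} autonomous---it is driven by $p^C$, which in turn depends on $u$, which depends on $q$, which depends on $\tilde p^M$. Arguing that $\delta\to 0$ then presupposes convergence of $p^C$, which is what you are trying to prove. Under the matching hypothesis your boundedness argument (box constraints on $u,\overline\mu,\underline\mu,\phi$ plus $\delta\in L_1$ giving $V^0$ bounded, hence $\eta$ bounded) is sound; the convergence step via a vanishing-perturbation/LaSalle hybrid is plausible but, as you note yourself, is the place where a precise theorem must be cited, since the unperturbed system is only Lyapunov stable, not asymptotically stable in all coordinates.

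In short: the paper's single-Lyapunov-function argument buys robustness to mismatch at the cost of a matrix computation; your cascade argument buys transparency at the cost of the matching assumption and a more delicate limiting step. Also note that the ``decoupling'' the paper mentions before the lemma is the separation of the swing dynamics (\ref{msys1})--(\ref{msys3}) from the turbine--controller block, not the error/controller cascade you exploit.
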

\begin{proof}
The stability proof is similar to proof of lemma \ref{stp}. Let us introduce lagrange function
\begin{equation*}
\begin{gathered}
V^1(\eta,u,\overline\mu,\underline\mu,\phi,p^M,v,\tilde p^M)=\\=\frac{1}{2}((\eta-\eta^*)^T(\eta-\eta^*)+\\
+(u-u^*)^T(u-u^*)+\\+(\overline\mu-\overline\mu^*)^T(\overline\mu-\overline\mu^*)+(\underline\mu-\underline\mu^*)^T(\underline\mu-\underline\mu^*)+\\
+(\phi-\phi^*)^T(\phi-\phi^*)+(p^M-(p^M)^*)^TT^M(p^M-(p^M)^*)+\\+(v-v^*)^TT^G(v-v^*)+(\tilde p^M-(\tilde p^M)^*)^T\tilde T(\tilde p^M-(\tilde p^M)^*)).
\end{gathered}
\end{equation*}
As before we separate its derivative into 4 components
\begin{equation*}
\begin{gathered}
\dot V^1(\eta,u,\overline\mu,\underline\mu,\phi,p^M,v,\tilde p^M)=\Upsilon_1(\eta,u,\overline\mu,\underline\mu,\phi)+\\
+\overline\Upsilon_2(\eta,u,\overline\mu,\underline\mu,\phi)+\underline\Upsilon_2(\eta,u,\overline\mu,\underline\mu,\phi)+\Upsilon_3(\eta,u,\overline\mu,\underline\mu,\phi)+\\
+\Upsilon_4(\eta,u,\overline\mu,\underline\mu,\phi,p^M,v,\tilde p^M).
\end{gathered}
\end{equation*}
As is shown in Lemma \ref{stp}, $\underline\Upsilon_2(\eta,u,\overline\mu,\underline\mu,\phi)\le0$, $\overline\Upsilon_2(\eta,u,\overline\mu,\underline\mu,\phi)\le0$, $\Upsilon_3(\eta,u,\overline\mu,\underline\mu,\phi)\le0$ and $\Upsilon_1(\eta,u,\overline\mu,\underline\mu,\phi)\le(u-u^*)^T((p^C)^*-p^C)\le(p^C-(p^C)^*)^T((p^C)^*-p^C)$. Therefore
$$\Upsilon_1(\eta,u,\overline\mu,\underline\mu,\phi)+\Upsilon_4(\eta,u,\overline\mu,\underline\mu,\phi,p^M,v,\tilde p^M)\le y^TQy,$$
where $y=(p^C-(p^C)^*,p^M-(p^M)^*,v-v^*,\tilde p^M-(\tilde p^M)^*)$ and
$$Q=\left(
      \begin{array}{ccccc}
        -I & \Gamma^u & \textbf{0} & -\Gamma^u \\
        \textbf{0} & -I & I & \textbf{0} \\
        I & \textbf{0} & -I & \textbf{0}\\
        I & \textbf{0} & \textbf{0} & -I  \\
      \end{array}
    \right).
$$
This matrix is equivalent to block diagonal matrix with $|L|$ 5 by 5 blocks, of the following form
$$\frac{1}{2}\left(
      \begin{array}{rrrrr}
        -2 &  1 &  1 &  0 \\
         1 & -2 &  1 &  0 \\
         1 &  1 & -2 &  0 \\
         0 &  0 &  0 & -2 \\
      \end{array}
    \right) \mbox{ or }
    \frac{1}{2}\left(
      \begin{array}{rrrrr}
        -2 &  0 &  1 &  1 \\
         0 & -2 &  1 &  0 \\
         1 &  1 & -2 &  0 \\
         1 &  0 &  0 & -2 \\
      \end{array}
    \right)
$$
depending on $\Gamma^u_i$ being $1$ or $0$ respectively, $i\in L$. This blocks are symmetric and diagonally dominant thus negative semi-definite and as a result
$$\Upsilon_1(\eta,u,\overline\mu,\underline\mu,\phi)+\Upsilon_4(\eta,u,\overline\mu,\underline\mu,\phi,p^M,v,\tilde p^M)\le y^TQy\le0$$
and
$$\dot V^1(\eta,u,\overline\mu,\underline\mu,\phi,p^M,v,\tilde p^M)\le0.$$
\end{proof}
\begin{lem}
In system  (\ref{msys4}), (\ref{msys5}), (\ref{cnt1})-(\ref{cnt3}), (\ref{cnt_new}), (\ref{cnt5})-(\ref{cnt7}) control variables $p^C$ converge to their equilibrium values:
$$\lim_{t\rightarrow\infty}p^C_i(t)=(p^C_i)^*,\;i=\overline{1,n}.$$
\end{lem}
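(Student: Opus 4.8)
The plan is to follow the template of Lemma~\ref{asp}, but to replace the direct monotonicity estimate used there by an appeal to LaSalle's invariance principle applied to the Lyapunov function $V^1$ of the previous lemma. Since that lemma already establishes $\dot V^1\le0$, the function $V^1$ is nonincreasing along trajectories, its sublevel sets are bounded, and every trajectory converges to the largest invariant set contained in $\{\dot V^1=0\}$. It therefore suffices to prove that on this invariant set one has $p^C_i=(p^C_i)^*$ for every $i$.

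First I would characterize the set $\{\dot V^1=0\}$. Writing the derivative as $\dot V^1=(\Upsilon_1+\Upsilon_4)+\overline\Upsilon_2+\underline\Upsilon_2+\Upsilon_3$ and recalling from the proof of the previous lemma that $\Upsilon_1+\Upsilon_4\le y^TQy\le0$ while $\overline\Upsilon_2,\underline\Upsilon_2,\Upsilon_3$ are each nonpositive, the equality $\dot V^1=0$ forces $y^TQy=0$, so that $y$ lies in the null space of $Q$. As $Q$ is block diagonal over the buses, this is a per-bus condition of one of the two block types displayed in the proof of the previous lemma. For the buses with $\Gamma^u_i=0$ the corresponding block is symmetric and diagonally dominant with two strictly dominant rows, hence negative definite, so the whole per-bus vector vanishes and in particular $p^C_i=(p^C_i)^*$ immediately. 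For the buses with $\Gamma^u_i=1$ the block is only negative semidefinite, with one-dimensional null space spanned by $(1,1,1,0)$, so $\dot V^1=0$ enforces merely the weaker relations $p^C_i-(p^C_i)^*=p^M_i-(p^M_i)^*=v_i-v_i^*$ together with $\tilde p^M_i=(\tilde p^M_i)^*$, and does not by itself identify $p^C_i$ with its equilibrium value.

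The decisive step is to eliminate these residual null-space directions by invariance. On the largest invariant set the relation $\tilde p^M_i=(\tilde p^M_i)^*$ persists for all time, hence $\dot{\tilde p}^M_i=0$; substituting this into the first estimator equation of (\ref{lag}) forces $\tilde\alpha_i=\tilde\alpha_i^*$, and then $\dot{\tilde\alpha}_i=0$ together with the second estimator equation of (\ref{lag}) forces $p^C_i=(p^C_i)^*$. Thus the control converges to equilibrium on the $\Gamma^u_i=1$ buses as well, while the saturation sub-cases (both $u_i$ and $u_i^*$ lying in the same clipped region) give $p^C_i=(p^C_i)^*$ trivially, exactly as in Lemma~\ref{stp}.

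The hardest part is precisely this reduction of the invariant set for the $\Gamma^u_i=1$ buses. In contrast to Lemma~\ref{asp}, where the bound on $\Upsilon_1$ already pinned $p^C$ to its equilibrium, here the enlarged Lyapunov function carries a genuinely degenerate quadratic form, and the conclusion relies on propagating the algebraic relation $\tilde p^M_i=(\tilde p^M_i)^*$ through the two first-order estimator dynamics. The care needed is in verifying that these null-space relations are consistent with the differential equations only at the equilibrium, so that no nontrivial constant-offset trajectory can survive inside the invariant set.
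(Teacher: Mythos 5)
Your proof is correct, and it is in fact more complete than the paper's own argument, which consists of the single sentence that the proof ``is similar to the proof of the Lemma~\ref{asp}.'' A literal transplant of that earlier argument does not suffice here, for exactly the reason you isolate: in Lemma~\ref{asp} the bound $\Upsilon_1\le -(u-u^*)^T(p^C-(p^C)^*)$ pins $p^C$ to $(p^C)^*$ pointwise on the set $\{\dot V^0=0\}$, whereas for the extended function $V^1$ the condition $\dot V^1=0$ only forces $y^TQy=0$, and for the unsaturated buses ($\Gamma^u_i=1$) the symmetrized block has the one-dimensional null space spanned by $(1,1,1,0)$, so pointwise one obtains only $p^C_i-(p^C_i)^*=p^M_i-(p^M_i)^*=v_i-v_i^*$ together with $\tilde p^M_i=(\tilde p^M_i)^*$. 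Closing this degeneracy with LaSalle's invariance principle --- propagating the constancy of $\tilde p^M_i$ through the estimator lags (\ref{lag}) to force $p^C_i=(p^C_i)^*$ --- is precisely the step the paper leaves implicit, and your version buys a genuinely rigorous treatment of the degenerate quadratic form at the cost of invoking invariance rather than a purely pointwise argument. Three minor caveats. First, your claim that the $\Gamma^u_i=0$ block is negative definite because it is diagonally dominant ``with two strictly dominant rows'' needs the additional observation that the block is irreducible (equivalently, weakly chained): weak dominance plus some strict rows does not by itself imply nonsingularity, though for this particular $4\times4$ block a direct null-space computation confirms definiteness. Second, the paper is internally inconsistent in a way your argument must navigate: $V^1$ and the displayed matrix $Q$ contain no $\tilde\alpha$ term (row four of $Q$ corresponds to the single lag $\tilde T\dot{\tilde p}^M=-\tilde p^M+p^C$), while (\ref{lag}) has two lags; your invariance chain works under either reading (with one lag, $\dot{\tilde p}^M_i=0$ gives $p^C_i=(\tilde p^M_i)^*=(p^C_i)^*$ directly), so the conclusion stands. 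Third, like the paper, you implicitly apply LaSalle to a system with discontinuous right-hand side (the switching matrices $\Gamma$ and the clipping in (\ref{cnt3})); a fully rigorous account would invoke a nonsmooth invariance principle, but this gap is inherited from the paper rather than introduced by you.
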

\begin{proof}
Proof is similar to the proof of the Lemma \ref{asp}
\end{proof}

\subsection{Communication constraints}
Note that differential system requires only multiplications by diagonal matrices, which can be done locally, by matrices $C$ or $C^T$ which by construction require only knowledge from neighbors and by matrix $S$ which requires only information sharing between buses participating in inter-area flows. As a result this scheme can be used in decentralized way.
\section{Numerical results}\label{nm}
Numerical experiments were undertaken using the New England System, Figure \ref{NE_sys}. Data from Power System Toolbox \cite{CR} was used. In order to ensure a fair comparison with AGC, only generator control was considered. Including load-side control would improve the transient performance of the controller. Additionally non-linear power flows are used \cite{MBB}, as well as single reheat tandem-compound steam turbine model \cite{KN} instead of first order equation (\ref{msys4}) for turbine dynamics.
\begin{figure}
      \centering
      \includegraphics[width=4in, trim=0 1 0 150]{./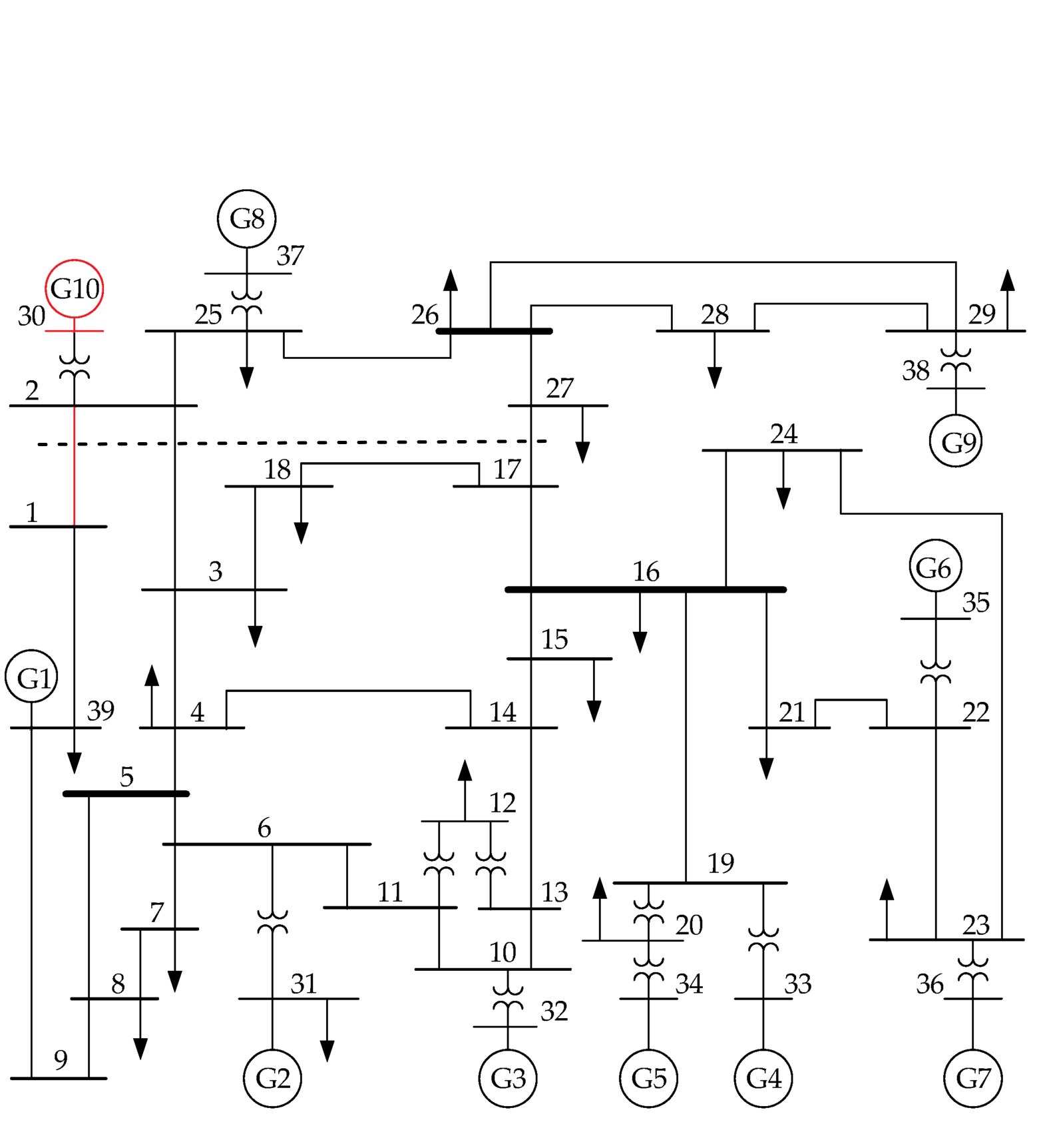}
      \caption{New England System.}\label{NE_sys}
      \label{figurelabel}
\end{figure}

Disturbance of 1 p.u. appears on the 10th generator (bus 30). System divided into two areas by lines (1,2), (2,3) and (17,27) sum of power flows deviations on this lines must be zero in equilibrium. In addition power flow on the line (1,2) must not exceed 0.1 p.u.

Figure \ref{freq} represents the system�s frequencies behavior under droop control with AGC (Figure \ref{freq_agc}) and under the proposed control (Figure \ref{freq_c}). It can be seen that control (\ref{cnt}) improves both the primary frequency control, as the frequency nadir (maximum frequency drop) is smaller, and the secondary frequency control as the settling time is smaller.

Power flows on line (1,2) and inter-area flows are given in Figures \ref{pf} and \ref{iaf} respectively. It can be seen, that unlike traditional control, when secure OPF is solved every 5 minutes or 1 hour, here congestion management is performed within the timescale of Primary Frequency control.

\begin{figure}
\centering
\begin{subfigure}[b]{0.4\textwidth}
\includegraphics[clip,width=3in,trim=90 250 100 250]{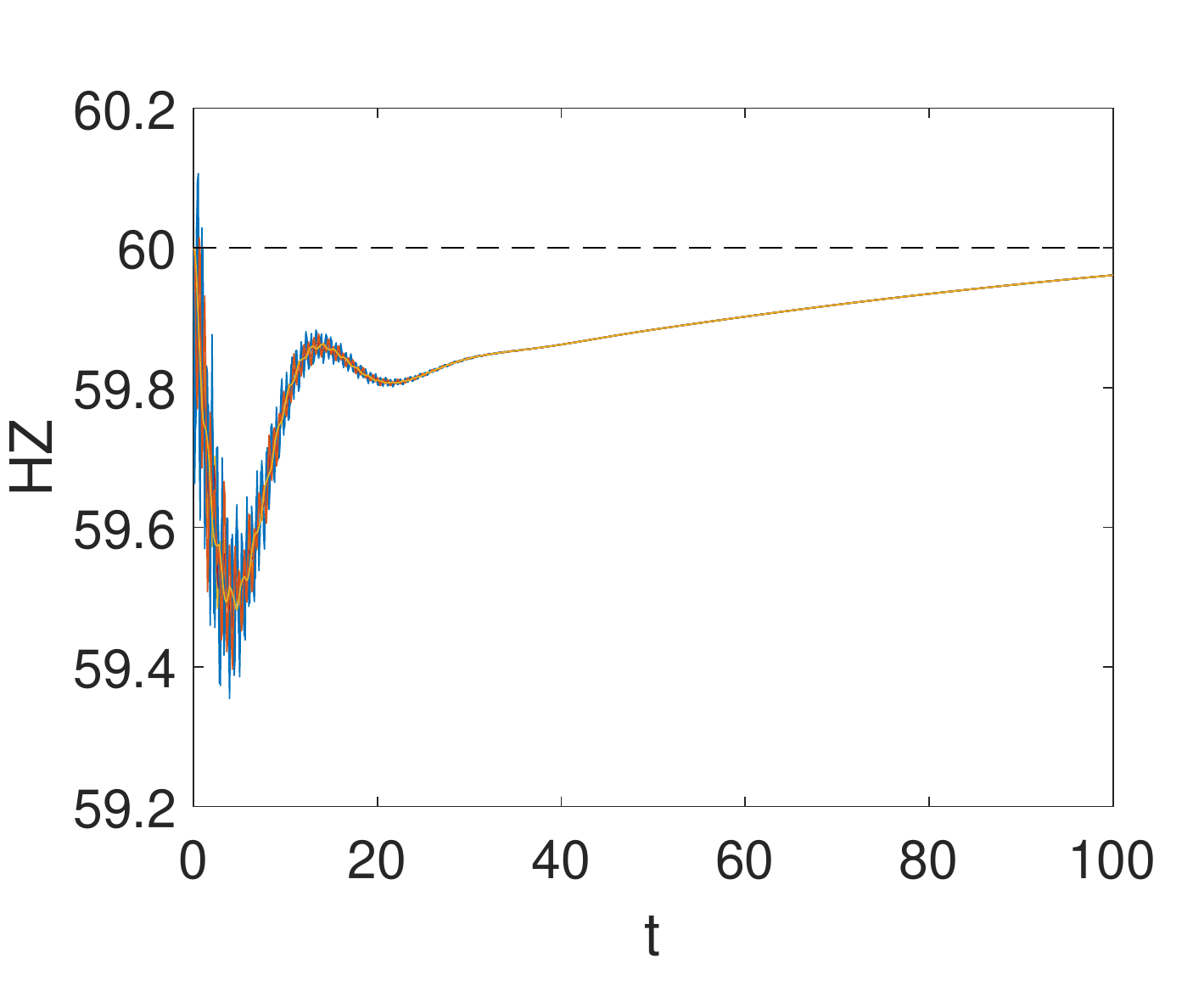}
\caption{Droop control and AGC.}\label{freq_agc}
\end{subfigure}
\begin{subfigure}[b]{0.4\textwidth}
\includegraphics[clip,width=3in,trim=90 250 100 250]{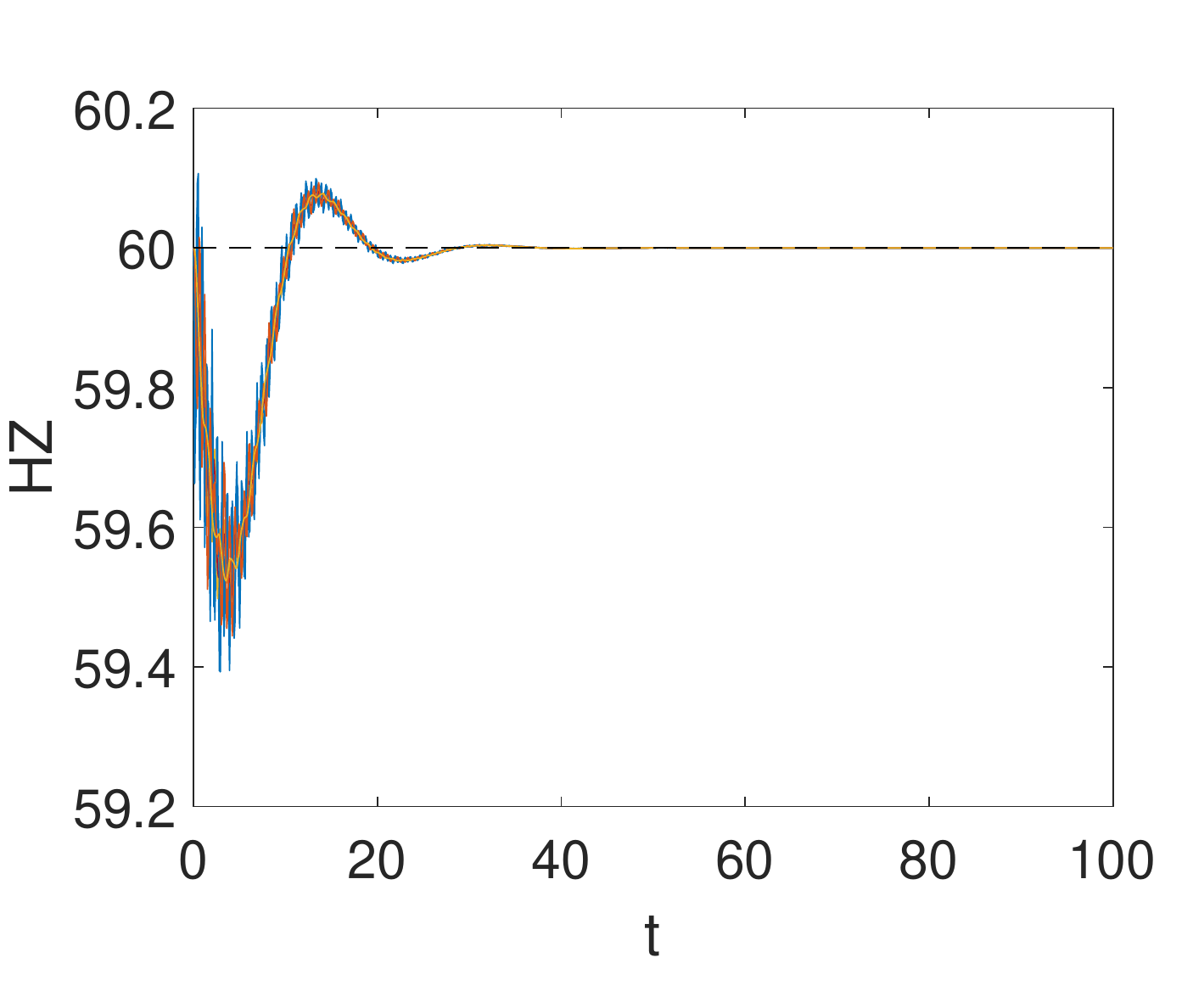}
\caption{Control (\ref{cnt}).}\label{freq_c}
\end{subfigure}
\caption{Frequencies of generator buses.}
\label{freq}
\end{figure}

\begin{figure}
\centering
\begin{subfigure}[b]{0.4\textwidth}
\includegraphics[clip,width=3in,trim=90 250 100 250]{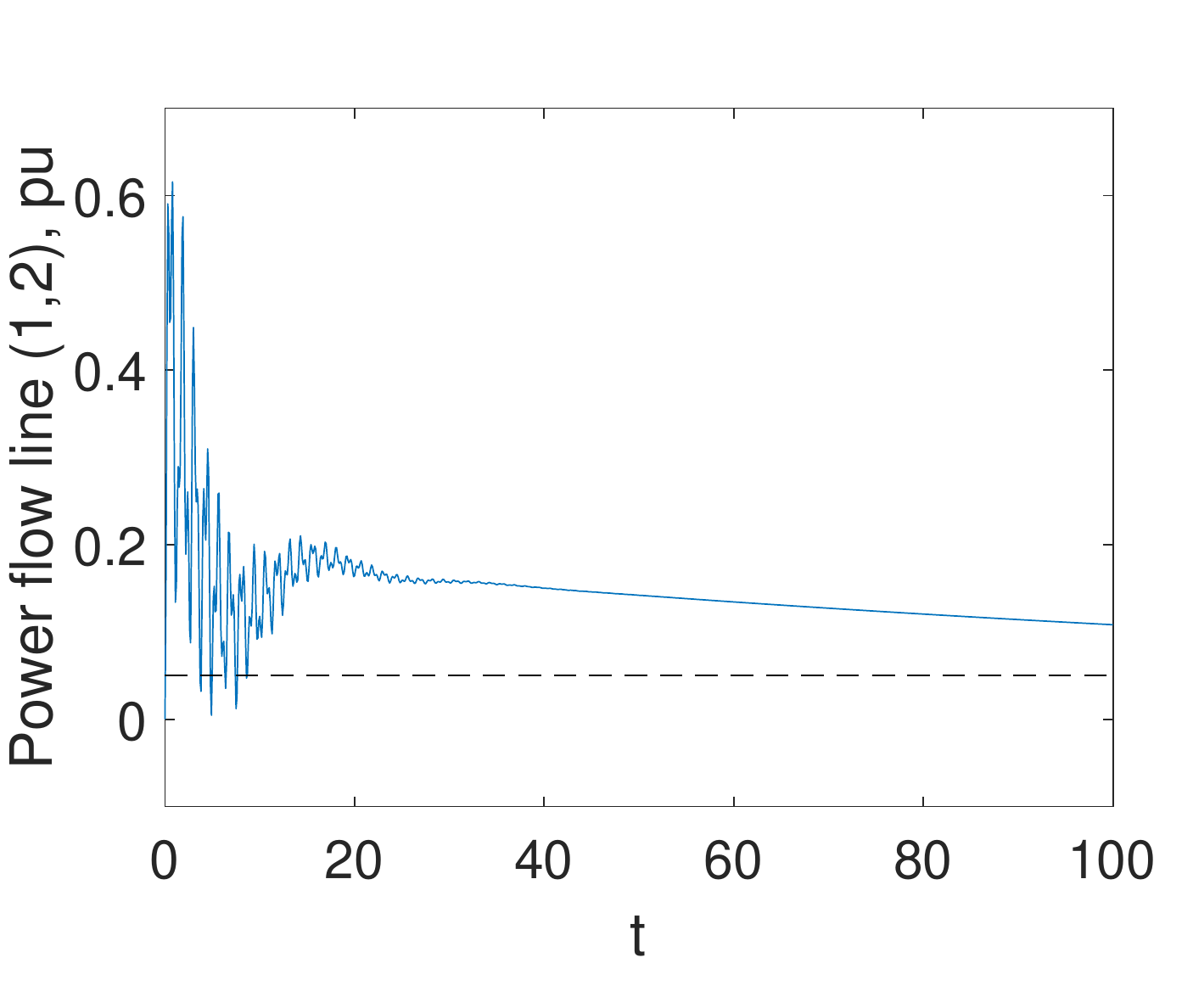}
\caption{Droop control and AGC.}\label{pf_agc}
\end{subfigure}
\begin{subfigure}[b]{0.4\textwidth}
\includegraphics[clip,width=3in,trim=90 250 100 250]{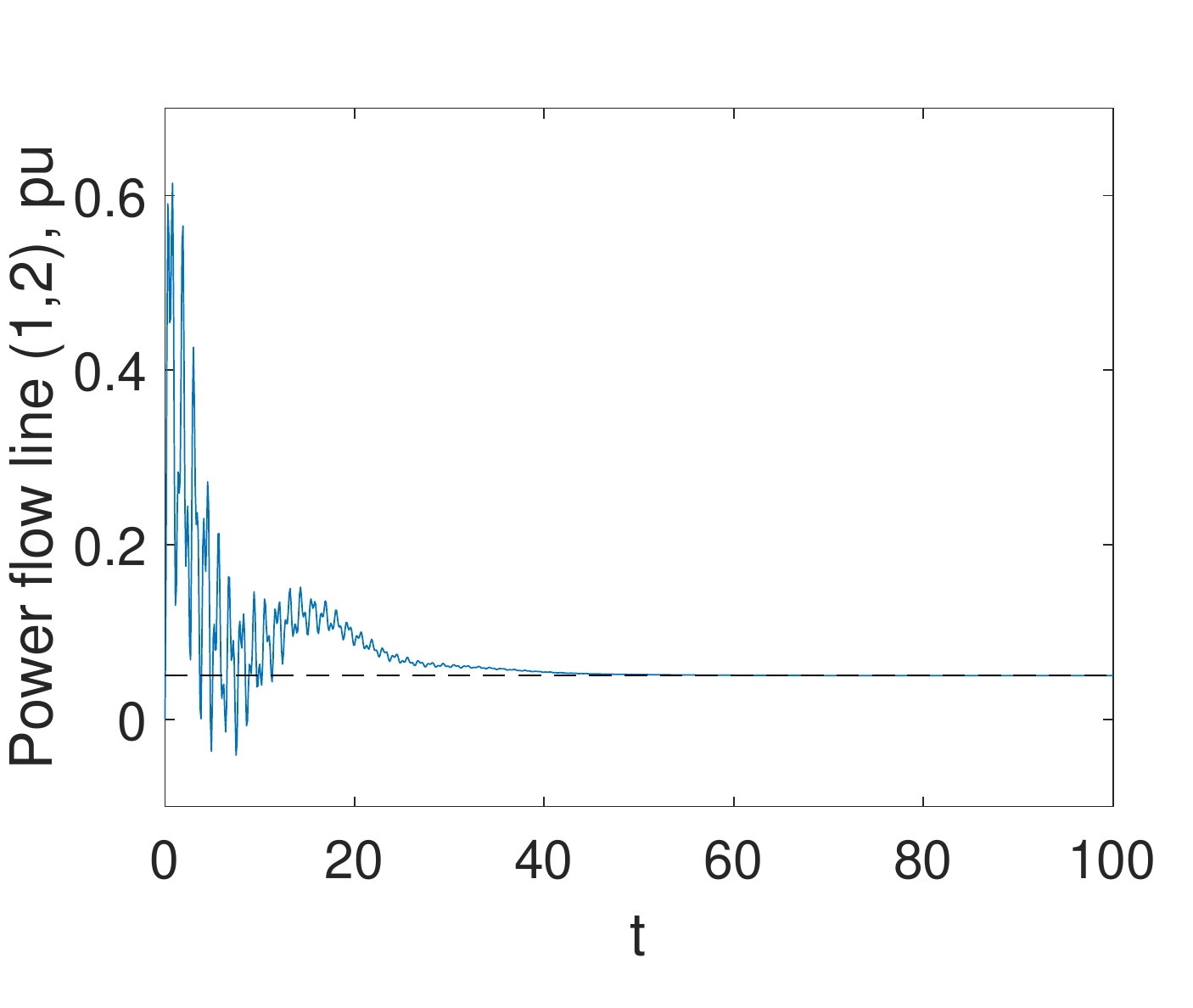}
\caption{Control (\ref{cnt}).}\label{pf_c}
\end{subfigure}
\caption{Deviations of power flow on the line (1,2).}
\label{pf}
\end{figure}

\begin{figure}
\centering
\begin{subfigure}[b]{0.4\textwidth}
\includegraphics[clip,width=3in,trim=90 250 100 250]{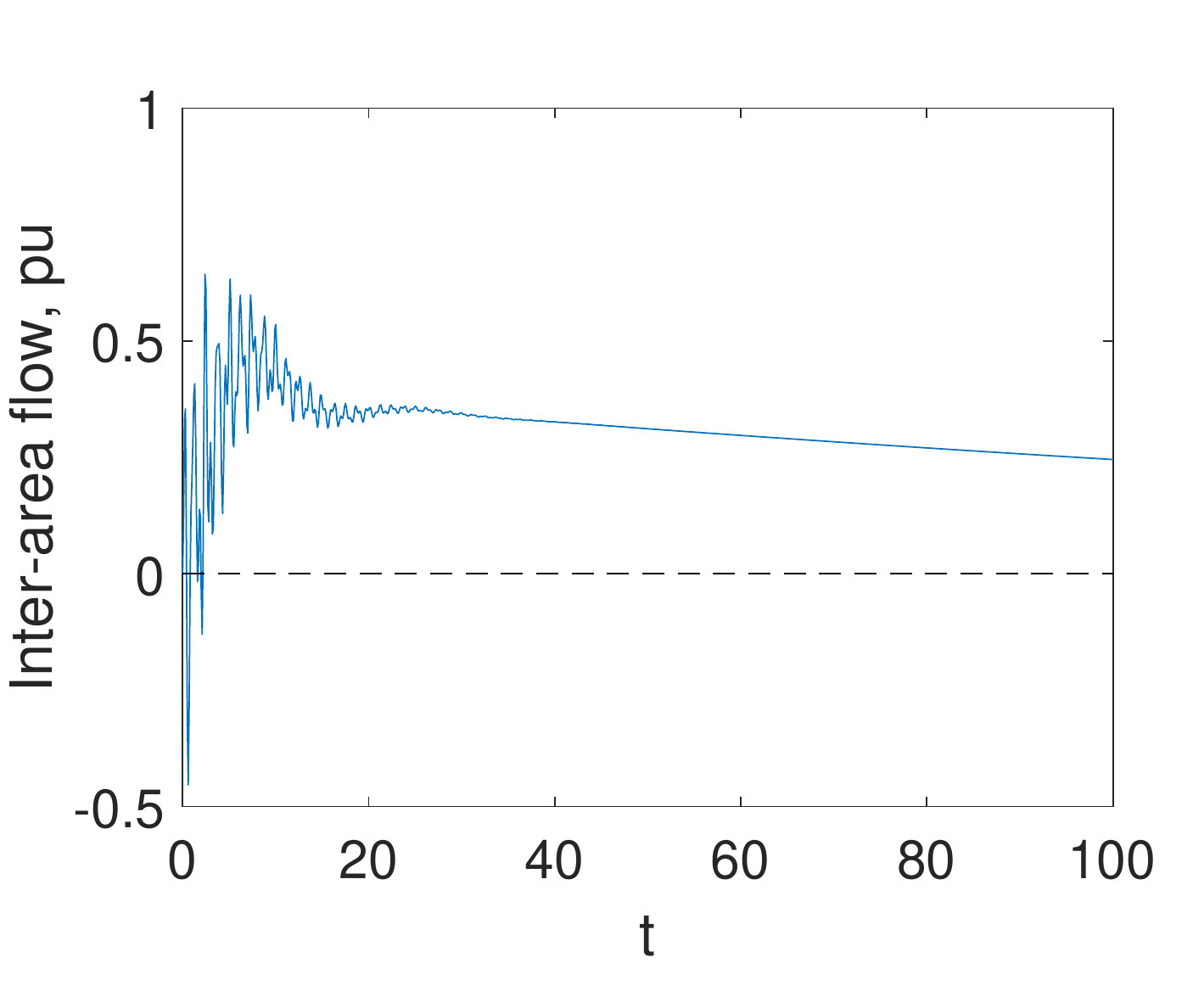}
\caption{Droop control and AGC.}\label{iaf_agc}
\end{subfigure}
\begin{subfigure}[b]{0.4\textwidth}
\includegraphics[clip,width=3in,trim=90 250 100 250]{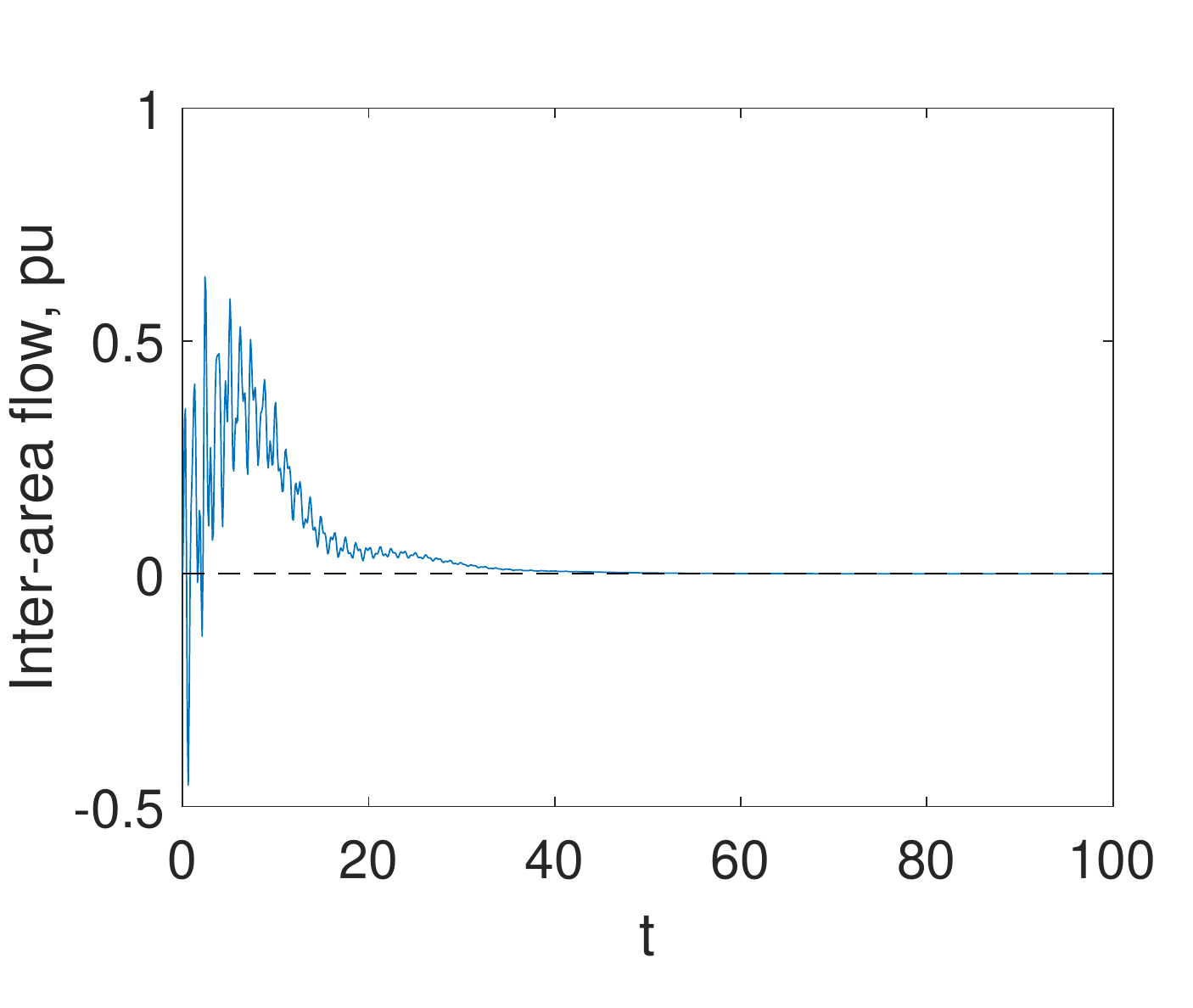}
\caption{Control (\ref{cnt}).}\label{iaf_c}
\end{subfigure}
\caption{Deviations of inter-area flow.}
\label{iaf}
\end{figure}

\section{Conclusion}\label{conc}
The control scheme, presented in this paper is designed as a replacement for AGC and  is aimed to provide faster frequency restoration and reduce frequency oscillations. It also performs congestion management in the timescale of tens of seconds hence making it possible to move from preventive (N-1) dispatch to (N-0) dispatch with corrective actions taken when a disturbance appears.
This will reduce the dispatch cost.
Power limits and inter-area flows constraints will be satisfied only if that des not affect frequency restoration, since their violation is not as dangerous as frequency oscillations. If all constraints are satisfied, then control delivers system to the closest to the last economic dispatch state in order to reduce generation cost.

Controllable loads are utilized as their responses to control signal are compared much quicker than those due compared to conventional generation and may therefore be used in low-inertia systems. Load side control greatly increases number of controllable busses. In order to ensure its control reliability in such circumstances, communication is limited to only communication between neighbours and between buses, which participate in inter-area flows. This also allows addition of new buses to the system without any communication with System Operator.

In the future we hope to ensure reliability of such control scheme in case of more complicated model, that includes nonlinear line flows and higher order generator equations.

\end{document}